\newtheorem{theorem}{Theorem}
\newtheorem{lemma}[theorem]{Lemma}
\newtheorem{corollary}[theorem]{Corollary}
\newenvironment{definition}{\defn \em }{\vspace{0.3 em}} 
\newenvironment{example}{\eg \em }{\vspace{0.3 em}} 
\newenvironment{remark}{\rem \em }{\vspace{0.3 em}} 
\newenvironment{proof}
{\vspace{0.5 em} \noindent {\bf Proof}}{\QED {\vspace{0.5 em}}}
\def\filledbox{\vrule height 1.8ex width .8ex depth 0ex }
\newcommand{\qed}{\filledbox}
\def\QED{\ifmmode\squareforqed\else{\unskip\nobreak\hfil
\penalty50\hskip1em\null\nobreak\hfil\qed
\parfillskip=0pt\finalhyphendemerits=0\endgraf}\fi}
\newcommand{\ignore}[1] { }
\newcommand{\bsm}[1]{\mbox{\boldmath \scriptsize$#1$}}
\newcommand{\F}{\mathcal F}
\renewcommand{\P}{\mathcal P}
\newcommand{\bfour}{\mbox{\boldmath$4$}}
\newcommand{\bB}{\mbox{\boldmath$B$}}
\newcommand{\bC}{\mbox{\boldmath$C$}}
\newcommand{\bD}{\mbox{\boldmath$D$}}
\newcommand{\bE}{\mbox{\boldmath$E$}}
\newcommand{\bL}{\mbox{\boldmath$L$}}
\newcommand{\bT}{\mbox{\boldmath$T$}}
\newcommand{\bV}{\mbox{\boldmath$V$}}
\newcommand{\bv}{\mbox{\boldmath$v$}}
\newcommand{\sbl}{\bsm L}
\newcommand{\bsigma}{\mbox{\boldmath$\Sigma$}}
\renewcommand{\t}{t}
\newcommand{\sbsigma}{\bsm \Sigma}
\newcommand{\mvk}{\textbf{\emph{mv-K}}}
\newcommand{\mvd}{\textbf{\emph{mv-D}}}
\newcommand{\mvt}{\textbf{\emph{mv-T}}}
\newcommand{\mvkfour}{\textbf{\emph{mv-K4}}}
\newcommand{\mvsfour}{\textbf{\emph{mv-S4}}}
\newcommand{\mvb}{\textbf{\emph{mv-B}}}
\newcommand{\mvsfive}{\textbf{\emph{mv-S5}}}
\newcommand{\mvil}{\textbf{\emph{mv\hspace{0.1 em}I\hspace{0.1 em}L}}}
\title{Many-Valued Modal Logic}
\author{Amir Karniel
\institute{Department of Mathematics\\
Technion - Israel Institute of Technology\\
Haifa 3200003,
Israel}
\and
Michael Kaminski
\institute{Department of Computer Science\\
Technion - Israel Institute of Technology\\
Haifa 3200003,
Israel}
}
\begin{document}
\maketitle

\begin{abstract}
We combine the concepts of modal logics and
many-valued logics in a general and comprehensive way.
Namely,
given any finite linearly ordered set of truth values and
any set of propositional connectives defined by truth tables,
we define the many-valued minimal normal modal logic,
presented as a Gentzen-like sequent calculus, and
prove its soundness and strong completeness with respect to
many-valued Kripke models.
The logic treats necessitation and possibility independently,
i.e., they are not defined by each other,
so that the duality between them is reflected in the proof system itself.
We also prove the finite model property (that implies strong decidability)
of this logic and consider some of its extensions.
Moreover, we show that there is exactly one way to define negation
such that De Morgan's duality between necessitation and possibility holds.
In addition,
we embed many-valued intuitionistic logic into
one of the extensions of our many-valued modal logic.
\end{abstract}

\section{Introduction}
\label{s: introduction}

The (two-valued) logic~K is the minimal normal modal logic.
It extends
classical propositional calculus with the modal connective~$\Box$,
the rule of inference
\begin{equation}
\label{eq: nec}
\frac{\textstyle \varphi}{\textstyle \Box \varphi}
\end{equation}
and the axiom scheme
\begin{equation}
\label{eq: M1}
\Box(\varphi \supset \psi) \supset (\Box\varphi \supset \Box\psi)
\end{equation}
Semantically,~K is characterized by
Kripke models~\cite{Kripke59,Kripke63}.

In this paper we define
a many-valued counterpart~\mvk{} of~K,
in which
the necessity connective is interpreted as the infimum of
all relevant values and
the possibility connective is interpreted as their supremum, and
nothing is assumed about the underlying propositional connectives.
Syntactically, our proof system is an extension of that
in~\cite{KaminskiF21} to the modal case.
The possibility connective~$\Diamond$ is treated explicitly.
The reason for such a treatment is that, in \mvk{},
$\Box$ and $\Diamond$ are not necessarily interdefinable.
This is because our set of connectives
does not necessarily contain negation, and even if it does,
nothing is assumed about its truth table.
We also show extensions of \mvk{},
which are counterparts of some well-known extensions of~K.
We establish the finite model property of \mvk{} and its extensions.
We then show the unique definition of negation such that
De Morgan's duality between $\Box$ and $\Diamond$ holds.
Finally, we prove that many-valued intuitionistic logic is a fragment
of one of the extensions of \mvk{}.

A number of many-valued normal modal logics is known from the literature.
In \cite{Ostermann88},
an $n$-valued modal logic is based on
the {\L}ukasiewicz classical $n$-valued connectives.
The paper contains Hilbert-style calculi for the generalizations of
the two-valued normal modal logics~T,S4, and~S5, and
the author notes that other generalizations are also possible.
This work seems to generalize~\cite{SchotchJLM78},
in which three-valued modal {\L}ukasiewicz logics are considered.
Three-valued modal logics with different connectives are considered
in \cite{Segerberg67}.

In~\cite{Thomason78} and~\cite{Morgan79},
general notions of many-valued modal logics are suggested,
using designated values and a rather general interpretation of
the modal connective.

In~\cite{Morikawa89},
proof systems relying on matrices are presented for
normal three-valued modal logics
based on any arbitrary set of propositional connectives.
Among other logics,
the three-valued counterparts of
the two-valued~T,S4, and~S5 are presented.

In~\cite{Fitting91}, the author presents a sequent calculus for
modal logics based on any finite lattice of truth values.
These logics, in addition to all propositional constants,
have all classical propositional connectives.
The semantics relies on a many-valued accessibility relation,
that is further discussed in~\cite{Fitting92}.
This paper also addresses the possibility connective~$\Diamond$
that is treated explicitly, because, in general,
$\Diamond$ and $\Box$ are not interdefinable.\footnote{
As noted above, these connectives are not interdefinable in
our paper either, but for a different reason.}
In addition, some extensions of the many-valued modal logics are mentioned
at the end of~\cite{Fitting92}.

The most general approach (for our purposes) was, probably, taken
in~\cite{Takano94},
where
a proof system, relying on matrices of labelled formulas,
is presented for many-valued normal modal logics with
an arbitrary set of propositional connectives.
The system is appropriate for any semantic interpretation of
the necessity connective satisfying certain conditions~--
not only for its interpretation as the infimum.
The system is weakly complete and possesses the subformula property
(that implies weak decidability).
However,
the possibility connective is not addressed in~\cite{Takano94} at all, and
extensions of the logic are not presented there.

Another general approach is taken in \cite{Ferm98},
where proof systems using tableaux are suggested for a variety of
finite-valued modal logics with generalized modalities.

In \cite{Bou11}, counterparts of K, using Hilbert-style proof systems,
are presented for any finite residuated lattice of truth values,
allowing many-valued accessibility relations.
The logics address only the necessity operator
(it is only mentioned that the possibility operator should, in general,
be addressed separately and not as an abbreviation of $\neg\Box\neg$), and
they are based on a fixed set of propositional connectives.
The semantics use designated values to interpret validity of formulas.

In \cite[Chapter 9.1]{Metcalfe08},
many-valued modal logics are discussed, referring also to Gentzen systems,
logic extensions and logic embeddings.
Again, a fixed set of connectives is assumed and
the semantic interpretation is algebraic.

Our research introduces a novel and comprehensive framework for
many-valued modal logics that stands out by
integrating several key features simultaneously:
the use of an arbitrary ``base logic'',
the use of Gentzen-like sequents of labelled formulas,
the independent treatment of both necessity and possibility modalities,
the demonstration of strong completeness and strong decidability, and
addressing all basic logic extensions.
While each of these elements has been explored individually in
previous studies, our work combines them into a single coherent system.
This combination allows for a more robust and flexible logical framework
that can handle a wider variety of logical scenarios and applications.
By employing labeled formulas
(discussed, e.g., in \cite{Baaz98} and \cite{KaminskiF21}),
we can address any truth value rather
than being limited to designated ones,
providing a significant advantage in terms of expressive power.
The motivation behind this research lies in the importance of
many-valued modal logics in contexts
with inherent uncertainty or gradations of truth,
such as fuzzy logic systems and multi-agent systems.
Additionally, extending these logics to include features like
transitive accessibility relations is crucial for modeling
more complex systems.
Our results include the finite model property,
ensuring the logics' strong decidability, and
embedding many-valued intuitionistic logic within our framework,
thus offering a comprehensive and robust tool for logical analysis.

The paper is organized as follows.
In Section~\ref{s: logic},
we introduce a many-valued modal logic~\mvk{} and
present a sound and strongly complete\footnote{
That is, complete with respect to the consequence relation.}
proof system for it.
Section~\ref{s: proof} deals with the {\em canonical model} theorem and
the proof of the strong completeness of~\mvk{}. 
Section~\ref{s: extensions} contains
some extensions of~\mvk{} and
their soundness and completeness
with respect to certain classes of Kripke models and,
in Section~\ref{s: fmp},
we explain why \mvk{} and
its extensions from Section~\ref{s: extensions}
possess the finite model property\footnote{
Thus, they are strongly decidable.}.
Then, in Section \ref{s: duality},
we present the appropriate definition of negation
so that~$\Box$ and~$\Diamond$ are interdefinable.
Finally,
in Section \ref{s: intuitionistic},
we embed many-valued intuitionistic logic in
our many-valued counterpart of~S4.

We conclude this section with the note that, because of
the limitation on the publications length,
a number of proofs is omitted.

\section{Many-valued modal logic}
\label{s: logic}

In this section we define a many-valued logic, \mvk{},
assuming a linear order on the set of truth values.

In what follows,
$\bV = \{ \bv_1,\ldots,\bv_n \}$, $n \geq 2$,
is a set of truth values ordered by
\[
\bv_1 < \bv_2 < \cdots < \bv_n
\]

Formulas of \mvk{} are built
from propositional variables by means of propositional connectives
(of arbitrary arities) and the modal connectives $\Box$ and $\Diamond$.
The set of all \mvk{} formulas will be denoted by~$\F$.
The semantics of propositional connectives is given by
truth tables, where, as usual, the truth table of
an $\ell$-ary propositional connective $\ast$ is a function
$\ast : \bV ^\ell \rightarrow \bV$
and
the semantics of the modal connectives is given below.

A {\em labelled} formula is a pair $(\varphi , k)$,
where $\varphi$ is a formula and $k = 1,\ldots,n$.
The intended meaning of such a labelled formula is that
$\bv_k$ is the truth value associated with $\varphi$.

Sequents are expressions of the form $\Gamma \rightarrow \Delta$,
where $\Gamma$ and $\Delta$ are finite (possibly empty) sets
of labelled formulas and
$\rightarrow$ is not a symbol of the underlying language.

The \mvk{} semantics is as follows.

A \emph{many-valued Kripke model}
(or \emph{many-valued K-model} or just Kripke model)
is a triple $M = \langle W,R,I \rangle$,
where
\begin{itemize}
\item
$W$ is a nonempty set (of possible worlds),
\item
$R$ is a
binary (accessibility) relation on $W$, and
\item
${I : W \times \P \rightarrow \bV}$,
where $\P$ is the set of propositional variables,
is a (valuation) function.
\end{itemize}

For a world $u \in W$,
we define the set of {\em successors} of $u$,
denoted by $S(u)$, as
\[
S(u) = \{v \in W : u R v \}
\]
and
extend $I$ to $W\times \F$, recursively, as follows.

\begin{itemize}
\item
$I(u,\ast(\varphi_1,\ldots,\varphi_\ell))
=
\ast(I(u,\varphi_1),\ldots,I(u,\varphi_\ell))$,
\item
$I(u,\Box\varphi)
=
\inf(\{ I(v,\varphi) : v \in S(u) \})$,
where $\inf(\emptyset)$ is $\bv_n$, and
\item
$I(u,\Diamond\varphi)
=
\sup(\{I(v,\varphi) : v\in S(u)\})$,
where $\sup(\emptyset)$ is $\bv_1$.
\end{itemize}

Note that,
if $S(u) \neq \emptyset$,
then, since $\bV$ is finite and linearly ordered,
$\inf(\{I(v,\varphi) : v \in S(u)\})$ and
$\sup(\{I(v,\varphi) : v\in S(u)\})$
are, actually,
$\min(\{I(v,\varphi) : v \in S(u)\})$ and
$\max(\{I(v,\varphi) : v \in S(u)\})$,
respectively.

We also write $M,u \models (\varphi,k)$, if $I(u,\varphi) = \bv_k$.

The satisfiability relation $\models$ between worlds of $W$ and
sequents of labelled formulas is defined as follows.

A world $u$ satisfies a sequent $\Gamma \rightarrow \Delta$,
denoted $M,u \models \Gamma\rightarrow\Delta$, if the following holds.
\begin{itemize}
\item
If for each $(\varphi , k) \in \Gamma$, $I(u,\varphi) = \bv_k$,
then for some $(\varphi , k) \in \Delta$,
$I(u,\varphi) = \bv_k$.\footnote{
\label{f: meta}
In other words, $v$ satisfies a sequent $\Gamma \rightarrow \Delta$,
if the metavalue of the classical metasequent
$\{ I(u,\varphi) = \bv_k : (\varphi,k) \in \Gamma \}
\rightarrow
\{ I(u,\varphi) = \bv_k : (\varphi,k) \in \Delta \}$
is ``true.''}
\end{itemize}
A Kripke model $M$ satisfies a sequent $\Gamma \rightarrow \Delta$,
if each world in $W$ satisfies $\Gamma \rightarrow \Delta$ and
$M$ satisfies a set of sequents $\bsigma$,
if it satisfies each sequent in $\bsigma$.
Finally,
a set of sequents $\bsigma$ {\em semantically entails}
a sequent $\Gamma \rightarrow \Delta$,
denoted $\bsigma \models \Gamma \rightarrow \Delta$,
if each many-valued Kripke model
satisfying $\bsigma$ also satisfies $\Gamma \rightarrow \Delta$.

Let $i$ and $j$ be nonnegative integers.
We denote the set of integers between $i$ and $j$ by $[i,j]$.
That is
\[
[i,j] = \{ k : i \leq k \leq j\}
\]
In particular, if $i>j$, $[i,j]$ is empty.

By definition,
\[
\overline{[i,j]}
=
\{ 1,\ldots,n \} \setminus [i,j]
=
[1,i - 1] \cup [j+1,n]
\]

For convenience, we define
\[
(\varphi,k)^+ = \{\varphi\} \times [k,n]
\]
and
\[
(\varphi,k)^- = \{\varphi\} \times [1,k]
\]

\begin{definition}
\label{d: x}
For a set of labelled formulas $\Gamma$,
the set of labelled formulas $\Gamma^\times$ is defined as follows.
\[
\Gamma^\times
=
\bigcup
\{ \{ \psi \} \times \overline{[i_\psi,j_\psi]} :
(\Box\psi,i_\psi),(\Diamond\psi,j_\psi) \in \Gamma \}
\]
That is,
for all $\psi$ such that
$(\Box\psi,i_\psi),(\Diamond\psi,j_\psi)\in \Gamma$,
$\Gamma^\times$ includes the set
$\{\psi\} \times \overline{[i_\psi,j_\psi]}$ and nothing more.
\end{definition}

The idea lying behind the definition of $\Gamma^\times$ is,
that in a Kripke model $M$,
if $uRv$ and $u$ satisfies \emph{every} element of $\Gamma$,
then $v$ satisfies \emph{no} element of $\Gamma^\times$.

We define next the proof system of \mvk{}.

The axioms are:
\begin{equation}
\label{eq: ordinary axiom}
(\varphi , k) \rightarrow (\varphi , k)
\end{equation}
and
\begin{equation}
\label{eq: table axiom}
(\varphi_1 , k_1),\ldots,(\varphi_\ell , k_\ell)
\rightarrow
(\ast(\varphi_1,\ldots,\varphi_\ell),k)
\end{equation}
for each table entry $\bv_{k_1},\ldots,\bv_{k_\ell}$ such that
$\ast(\bv_{k_1},\ldots,\bv_{k_\ell}) = \bv_k$,
and
the logical rules are

\par \noindent
\begin{equation}
\label{eq: mvk 1}
\frac
{\textstyle
(\varphi,k) \rightarrow \Gamma^\times
}
{\textstyle
(\Box\varphi,k),\Gamma \rightarrow}
\, \ \ \ \ k\neq n
\end{equation}
\begin{equation}
\label{eq: mvk 2}
\frac
{\textstyle
(\varphi,k) \rightarrow \Gamma^\times
}
{\textstyle
(\Diamond\varphi,k),\Gamma \rightarrow}
\, \ \ \ \ k\neq 1
\end{equation}
\par \noindent
and
the structural rules below.

\par \noindent
$k$-left-shift:
\begin{equation}
\label{eq: mvml ls}
\frac
{\textstyle
\Gamma , (\varphi , k) \rightarrow \Delta}
{\textstyle
\Gamma
\rightarrow
\Delta , \{ \varphi \} \times \overline{\{ k \}}}
\end{equation}
$k^\prime,k^{\prime\prime}$-right-shift:
\begin{equation}
\label{eq: mvml rs}
\frac
{\textstyle
\Gamma \rightarrow \Delta , (\varphi , k^\prime)}
{\textstyle
\Gamma , (\varphi , k^{\prime\prime}) \rightarrow \Delta}
\, \ \ \ \ k^\prime\neq k^{\prime\prime}
\end{equation}
$k$-left-weakening:
\begin{equation}
\label{eq: mvml lw}
\frac
{\textstyle
\Gamma \rightarrow \Delta}
{\textstyle
\Gamma , (\varphi ,k) \rightarrow \Delta}
\end{equation}
$k$-right-weakening:
\begin{equation}
\label{eq: mvml rw}
\frac
{\textstyle
\Gamma \rightarrow \Delta}
{\textstyle
\Gamma \rightarrow \Delta , (\varphi ,k)}
\end{equation}
$k$-cut:
\begin{equation}
\label{eq: mvml cut}
\frac
{\textstyle
\Gamma^\prime \rightarrow
\Delta^\prime , (\varphi , k)
\ \ \ \ \ \
\Gamma^{\prime\prime} , (\varphi , k)
\rightarrow
\Delta^{\prime\prime}}
{\textstyle
\Gamma^\prime , \Gamma^{\prime\prime}
\rightarrow
\Delta^\prime , \Delta^{\prime\prime}}
\end{equation}
$k^\prime,k^{\prime\prime}$-resolution:
\begin{equation}
\label{eq: mvml resolution}
\frac
{\textstyle
\Gamma^\prime \rightarrow \Delta^\prime , (\varphi , k^\prime)
\ \ \ \ \ \
\Gamma^{\prime\prime}
\rightarrow
\Delta^{\prime\prime} , (\varphi , k^{\prime\prime})
}
{\textstyle
\Gamma^\prime , \Gamma^{\prime\prime}
\rightarrow
\Delta^\prime , \Delta^{\prime\prime}}
\, \ \ \ \ k^\prime\neq k^{\prime\prime}
\end{equation}

In fact, cut and resolution are derivable from each other,
see~\cite[Proposition~3.3]{KaminskiF21}.

We shall also need the two following derivable rules.
One is
``multi-shift''
\begin{equation}
\label{eq: mvml multi-shift}
\frac
{\textstyle
\{ \Gamma_k , (\varphi , k) \rightarrow \Delta_k : k \in K \}}
{\textstyle
\bigcup_{k \in K} \Gamma_k
\rightarrow \bigcup_{k \in K} \Delta_k ,
\{ \varphi \} \times \overline{K}}
\end{equation}
for $K \subset \{ 1,\ldots,n \}$ and
$\overline{K} = \{ 1,\ldots,n \} \setminus K$,
see~\cite[Remark~3.5]{KaminskiF21}, and
the other is its generalization
\begin{equation}
\label{eq: mvml super-multi-shift}
\hspace{-2.2 em}
\frac
{\textstyle
\{ \Gamma_{k_1,\ldots,k_\ell} ,
(\varphi_1 , k_1),\ldots,(\varphi_\ell,k_\ell)
\rightarrow
\Delta_{k_1,\ldots,k_\ell} :
k_1 \in K_1,\ldots,k_\ell \in K_\ell \}}
{\textstyle
\bigcup_{k_1\in K_1,\ldots,k_\ell \in K_\ell} \Gamma_{k_1,\ldots,k_\ell}
\rightarrow
\bigcup_{k_1\in K_1,\ldots,k_\ell \in K_\ell} \Delta_{k_1,\ldots,k_\ell} ,
\{ \varphi_1 \} \times \overline{K_1},\ldots,
\{ \varphi_\ell \} \times \overline{K_\ell}}
\end{equation}

The derivation of~\eqref{eq: mvml super-multi-shift} is rather long and
is omitted.\footnote{
A skeptical reader can easily verify that this rule is valid and
then add it to \mvk{}.}

We precede the statement of the soundness and completeness theorem
for \mvk{} with a number of examples.

\begin{example}
Sequents
\label{e: 31}
\begin{equation}
\label{eq: box leq diamond}
(\Box\varphi , k) \rightarrow (\Diamond\varphi, k)^+\, \ \ \ \ k \neq n
\end{equation}
and
\begin{equation}
\label{eq: diamond geq box}
(\Diamond\varphi , k) \rightarrow (\Box\varphi, k)^-\, \ \ \ \ k \neq 1
\end{equation}
are \mvk{} derivable.

The derivation of~\eqref{eq: box leq diamond} is as follows,
where, in steps $2_j$ and $3_j$, $j < k$.

\[
\hspace{-1.1 em}
\begin{array}{lll}
1. &
(\varphi,k) \rightarrow (\varphi,k)
&
\mbox{axiom~\eqref{eq: ordinary axiom}}
\\
2. & 
(\varphi,k) \rightarrow \{\varphi\} \times [1,n]
&
\mbox{follows from 1 by
$n - 1$ right weakenings~\eqref{eq: mvml rw}}
\\
2_j. &
(\varphi,k) \rightarrow \{\varphi\} \times \overline{[k,j]}
&
\mbox{follows from 2, because, for $j < k$, $[k,j] = \emptyset$}
\\
3_j. &
(\Box\varphi,k),(\Diamond\varphi,j) \rightarrow
&
\mbox{follows from $2_j$ by~\eqref{eq: mvk 1} with
$\Gamma$ being $\{(\Box\varphi,k),(\Diamond\varphi,j)\}$}
\\
4. &
(\Box\varphi , k) \rightarrow (\Diamond\varphi, k)^+
&
\mbox{follows from all $3_j$, $j < k$,
by multi-shift~\eqref{eq: mvml multi-shift}}
\end{array}
\]

The derivation of~\eqref{eq: diamond geq box} is dual to that
of~\eqref{eq: box leq diamond} and is omitted.
\end{example}

\begin{example}
\label{e: 37}
Sequent
\begin{equation}
\label{eq: box is n}
(\Box\varphi , n) \rightarrow (\Diamond\varphi, 1),(\Diamond\varphi, n)
\end{equation}
is \mvk{} derivable.

Indeed, for $k \neq 1,n$,
by~$k$ $n,j$-right-shifts~\eqref{eq: mvml rs}, $j \leq k$,
on~\eqref{eq: diamond geq box}, we obtain
\[
(\Diamond\varphi,k),(\Box\varphi,n)\rightarrow \, \ \ \ \
k = 2,3,\ldots, n - 1
\]
from which~\eqref{eq: box is n}
follows by multi-shift~\eqref{eq: mvml multi-shift}.
\end{example}

\begin{example}
Sequents
\label{e: 33}
\begin{equation}
\label{eq: dead end}
(\Box\varphi,n),(\Diamond\varphi,1)\rightarrow(\Box\psi,n) 
\end{equation}
and
\begin{equation}
\label{eq: dead end 2}
(\Box\varphi,n),(\Diamond\varphi,1)\rightarrow(\Diamond\psi,1)
\end{equation}
are \mvk{} derivable.

The derivation of~\eqref{eq: dead end} is as follows,
where, in steps $2_i$ and $3_i$, $i \neq n$.
\[
\hspace{-1.1 em}
\begin{array}{lll}
1. & 
\rightarrow \{\varphi\} \times [1,n]
&
\mbox{derivable sequent, see~\cite[Proposition~3.4]{KaminskiF21}}
\\
2_i. & 
(\psi,i) \rightarrow \{\varphi\} \times [1,n]
&
\mbox{follows from 1 by $i$-left-weakening~\eqref{eq: mvml lw}}
\\
3_i. &
(\Box\psi,i),(\Box\varphi,n),(\Diamond\varphi,1)\rightarrow
&
\mbox{follows from $2_i$ by~\eqref{eq: mvk 1}, with $\Gamma$ being}
\\
& &
\mbox{$\{(\Box\varphi,n),(\Diamond\varphi,1)\}$,
because $\overline{[n,1]} = [1,n]$}
\\
4. &
(\Box\varphi,n),(\Diamond\varphi,1)\rightarrow(\Box\psi,n)
&
\mbox{follows from all $3_i$, $i \neq n$,
by multi-shift~\eqref{eq: mvml multi-shift}}
\end{array}
\]

The derivation of~\eqref{eq: dead end 2} is dual to that
of~\eqref{eq: dead end} and is omitted.
\end{example}

\begin{example}
In this example we show that the sequent
\[
(\Box (p\supset q),3),(\Box p,3)\rightarrow (\Box q,3)
\]
is derivable in the modal extension of
the $\L$ukasiewicz three-valued logic.
That is,
$n=3$ and the truth table of implication $\supset$ is as follows.
\[
\begin{array}{c|c|c|c}
\supset & 1 & 2 & 3 
\\
\hline
1 & 3 & 3 & 3
\\
2 & 2 & 3 & 3
\\
3 & 1 & 2 & 3
\end{array}
\]

In steps $3_k,4_k,5_k,7_k$ and $9_k$ of the proof below, $k \in \{1,2\}$.
\[
\hspace{-0.5 em}
\begin{array}{lll}
1. & 
(p,3),(q,1) \rightarrow (p\supset q, 1)
&
\mbox{axiom~\eqref{eq: table axiom}}
\\
2. & 
(p,3),(q,2) \rightarrow (p\supset q, 2)
&
\mbox{axiom~\eqref{eq: table axiom}}
\\
3_k. & 
(p,3),(q,k),(p\supset q, 3) \rightarrow 
&
\mbox{follows from
either 1 or 2 by right-shift~\eqref{eq: mvml rs}}
\\
4_k. & 
(q,k) \rightarrow \{p\}\times [1,2],\{p\supset q\}\times [1,2]
&
\mbox{follows from
$3_k$ by left-shifts \eqref{eq: mvml ls}}
\\
5_k. & 
(\Box q,k),(\Box p,3),(\Diamond p,3),(\Box (p\supset q),3),
(\Diamond (p\supset q),3)\rightarrow
&
\mbox{follows from
$4_k$ by \eqref{eq: mvk 1}, because $[1,2]=\overline{[3,3]}$}
\\
6. & 
(\Box p,3)\rightarrow (\Diamond p,3),(\Diamond p,1)
&
\mbox{\eqref{eq: box is n}}
\\
7_k. & 
(\Box q,k),(\Box p,3),(\Box (p\supset q),3),(\Diamond (p\supset q),3)
\rightarrow (\Diamond p,1)
&
\mbox{follows from $5_k$ and 6 by cut~\eqref{eq: mvml cut}}
\\
8. & 
(\Box (p\supset q),3)
\rightarrow
(\Diamond (p\supset q),3),(\Diamond (p\supset q),1)
&
\mbox{\eqref{eq: box is n}}
\\
9_k. & 
(\Box q,k),(\Box p,3),(\Box p\supset q,3)
\rightarrow
(\Diamond p,1),(\Diamond (p\supset q),1)
&
\mbox{follows from $7_k$ and 8 by cut~\eqref{eq: mvml cut}}
\\
10. & 
(\Box p,3),(\Box p\supset q,3)
\rightarrow
(\Diamond p,1),(\Diamond (p\supset q),1),(\Box q,3)
&
\mbox{follows from all $9_k$ by multi-shift~\eqref{eq: mvml multi-shift}}
\\
11. & 
(\Box p,3),(\Diamond p,1)\rightarrow (\Box q,3)
&
\mbox{\eqref{eq: dead end}}
\\
12. &
(\Box p,3),(\Box p\supset q,3)
\rightarrow
(\Diamond (p\supset q),1),(\Box q,3)
&
\mbox{follows from 10 and 11 by
cut~\eqref{eq: mvml cut}}
\\
13. & 
(\Box (p\supset q),3),(\Diamond (p\supset q),1)\rightarrow (\Box q,3)
&
\mbox{\eqref{eq: dead end}}
\\
14. &
(\Box p,3),(\Box p\supset q,3)\rightarrow (\Box q,3)
&
\mbox{follows from 12 and 13 by
cut~\eqref{eq: mvml cut}}
\end{array}
\]
\end{example}

\begin{theorem}
\label{t: main}
Let $\bsigma$ and $\Gamma \rightarrow \Delta$ be
a set of sequents and a sequent, respectively.
Then $\bsigma \vdash \Gamma \rightarrow \Delta$
if and only if
$\bsigma \models \Gamma \rightarrow \Delta$.
\end{theorem}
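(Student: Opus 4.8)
The plan is to prove the two directions separately: soundness ($\bsigma \vdash \Gamma \rightarrow \Delta$ implies $\bsigma \models \Gamma \rightarrow \Delta$) by induction on derivations, and strong completeness ($\bsigma \models \Gamma \rightarrow \Delta$ implies $\bsigma \vdash \Gamma \rightarrow \Delta$) by contraposition via a canonical model, as announced for Section~\ref{s: proof}.

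For soundness I would show that every axiom is valid and every rule preserves semantic entailment from $\bsigma$. Axiom~\eqref{eq: ordinary axiom} is immediate, and \eqref{eq: table axiom} is valid because $I(u,\ast(\varphi_1,\ldots,\varphi_\ell)) = \ast(I(u,\varphi_1),\ldots,I(u,\varphi_\ell))$. The structural rules \eqref{eq: mvml ls}--\eqref{eq: mvml resolution} act within a single world, so their soundness is exactly as in the propositional system of~\cite{KaminskiF21}. The only genuinely modal steps are \eqref{eq: mvk 1} and \eqref{eq: mvk 2}. For \eqref{eq: mvk 1}, assume $M\models(\varphi,k)\rightarrow\Gamma^\times$ and let a world $u$ satisfy every labelled formula in $\{(\Box\varphi,k)\}\cup\Gamma$ with $k\neq n$; since $I(u,\Box\varphi)=\bv_k\neq\bv_n=\inf(\emptyset)$, the set $S(u)$ is nonempty and, $\bV$ being finite and linearly ordered, the infimum is attained, so some $v\in S(u)$ has $I(v,\varphi)=\bv_k$. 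For every $\psi$ with $(\Box\psi,i_\psi),(\Diamond\psi,j_\psi)\in\Gamma$ we get $\bv_{i_\psi}=I(u,\Box\psi)\le I(v,\psi)\le I(u,\Diamond\psi)=\bv_{j_\psi}$, so $v$ satisfies no element of $\{\psi\}\times\overline{[i_\psi,j_\psi]}$, hence none of $\Gamma^\times$; this contradicts $M,v\models(\varphi,k)\rightarrow\Gamma^\times$ together with $M,v\models(\varphi,k)$. So no such $u$ exists, i.e. $M$ satisfies $(\Box\varphi,k),\Gamma\rightarrow$. Rule~\eqref{eq: mvk 2} is dual. This is precisely the intuition recorded after Definition~\ref{d: x}.

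For completeness I would call a set $\Phi$ of labelled formulas a \emph{$\bsigma$-theory} if $\bsigma\not\vdash\Phi_0\rightarrow$ for every finite $\Phi_0\subseteq\Phi$, and \emph{complete} if for each formula $\varphi$ exactly one $k$ has $(\varphi,k)\in\Phi$. A Lindenbaum-type argument -- using the structural rules, and crucially the (super-)multi-shift rules \eqref{eq: mvml multi-shift}, \eqref{eq: mvml super-multi-shift} to ensure that at each stage of an enumeration of $\F$ some admissible truth value can be adjoined -- shows that every $\bsigma$-theory extends to a complete one; moreover, if $F$ is a set of labelled formulas with $\bsigma\not\vdash\Phi_0\rightarrow F_0$ for all finite $\Phi_0\subseteq\Phi$, $F_0\subseteq F$, then $\Phi$ extends to a complete $\bsigma$-theory disjoint from $F$. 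The canonical model is $M^c=\langle W^c,R^c,I^c\rangle$, where $W^c$ is the set of complete $\bsigma$-theories, $I^c(\Phi,p)=\bv_k$ iff $(p,k)\in\Phi$, and $\Phi R^c\Psi$ iff $\Psi\cap\Phi^\times=\emptyset$ ($\Phi^\times$ as in Definition~\ref{d: x}, now applied to an infinite set). The heart of the proof is the \emph{truth lemma}, $I^c(\Phi,\varphi)=\bv_k\iff(\varphi,k)\in\Phi$, by induction on $\varphi$: the propositional case is handled by \eqref{eq: table axiom} as in~\cite{KaminskiF21}; for $\Box\varphi$ with $(\Box\varphi,i),(\Diamond\varphi,j)\in\Phi$, the definition of $R^c$ forces the label of $\varphi$ in any $\Psi\in S(\Phi)$ into $[i,j]$, so $I^c(\Phi,\Box\varphi)\ge\bv_i$ when $S(\Phi)\neq\emptyset$ (and $I^c(\Phi,\Box\varphi)=\bv_n$ when $S(\Phi)=\emptyset$, in which case, by Example~\ref{e: 33}, necessarily $i=n$); conversely, for $i\neq n$, reading \eqref{eq: mvk 1} backwards gives $\bsigma\not\vdash(\varphi,i)\rightarrow\Gamma_0^\times$ for every finite $\Gamma_0\subseteq\Phi$, hence $\bsigma\not\vdash(\varphi,i)\rightarrow F$ for every finite $F\subseteq\Phi^\times$, so the refined Lindenbaum lemma supplies $\Psi\in W^c$ with $(\varphi,i)\in\Psi$ and $\Psi\cap\Phi^\times=\emptyset$, i.e. $\Psi\in S(\Phi)$, giving $I^c(\Phi,\Box\varphi)\le\bv_i$. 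The case $\Diamond\varphi$ is dual, via \eqref{eq: mvk 2}. Finally $M^c\models\bsigma$ (otherwise some $\Phi$ would satisfy the antecedent of a $\sigma\in\bsigma$ but none of its succedent, and iterating right-shift~\eqref{eq: mvml rs} on $\sigma$ would contradict $\Phi$ being a $\bsigma$-theory), and if $\bsigma\not\vdash\Gamma\rightarrow\Delta$ then $\bsigma\not\vdash\Gamma\rightarrow\Delta_0$ for finite $\Delta_0\subseteq\Delta$, so the refined Lindenbaum lemma yields $\Phi\in W^c$ with $\Gamma\subseteq\Phi$ and $\Phi\cap\Delta=\emptyset$; by the truth lemma $M^c,\Phi\not\models\Gamma\rightarrow\Delta$, whence $\bsigma\not\models\Gamma\rightarrow\Delta$.

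I expect the main obstacle to be the modal step of the truth lemma, and within it the existence of the witnessing successor $\Psi$: this is where the backward reading of \eqref{eq: mvk 1}/\eqref{eq: mvk 2} must be combined with weakening and with a Lindenbaum construction delicate enough to respect simultaneously, for all $\psi$, the two-sided constraints $[i_\psi,j_\psi]$ imposed by $\Phi^\times$ -- which is exactly the role of the generalized multi-shift rule \eqref{eq: mvml super-multi-shift}. Reconciling the dead-end worlds with $\inf(\emptyset)=\bv_n$ and $\sup(\emptyset)=\bv_1$ (cf. Examples~\ref{e: 37} and~\ref{e: 33}) and checking $M^c\models\bsigma$ are the remaining, more routine, points.
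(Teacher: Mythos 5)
Your proof is correct, and the soundness half plus the overall canonical-model architecture (worlds are maximal $\bsigma$-consistent sets, $\Phi R^c\Psi$ iff $\Psi\cap\Phi^\times=\emptyset$, truth lemma by induction) coincide with the paper's Section~\ref{s: proof}. Where you genuinely diverge is in the one step the paper itself treats as the crux: producing the witnessing successor in the $\Box$-case of the truth lemma. The paper fixes the intervals $[i_m,j_m]$ in advance, assumes every choice $k_m\in[i_m,j_m]$ yields a $\bsigma$-inconsistent set, invokes K\"onig's lemma on a tree of partial choices to truncate this infinite family of refutations at a finite level $L$ (this is where its Lemma~\ref{l: i leq j} is needed, to guarantee the tree is over nonempty intervals), and then discharges the finitely many sequents with the generalized rule~\eqref{eq: mvml super-multi-shift} followed by~\eqref{eq: mvk 1}. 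You instead read~\eqref{eq: mvk 1} backwards once, to conclude $\bsigma\not\vdash(\varphi,i)\rightarrow F$ for every finite $F\subseteq\Phi^\times$, and push all the remaining work into a strengthened Lindenbaum lemma: extend $\{(\varphi,i)\}$ to a complete theory avoiding the (infinite) forbidden set $\Phi^\times$, the stage-by-stage consistency being secured by a single $n$-fold multi-shift. This buys you something real: you never need K\"onig's lemma, you never need rule~\eqref{eq: mvml super-multi-shift} (whose derivation the paper omits as ``rather long''), and the nonemptiness of $[i_\psi,j_\psi]$ (the paper's Lemma~\ref{l: i leq j}) comes out for free, since a label in $\Phi^\times$ is excluded already by the axiom $(\psi,k)\rightarrow(\psi,k)$. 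The price is that you must prove the refined Lindenbaum lemma yourself (the paper's Lemma~\ref{l: zorn} only handles finite succedents), and your stage argument uses multi-shift with $K=\{1,\ldots,n\}$, whereas~\eqref{eq: mvml multi-shift} is stated for $K\subset\{1,\ldots,n\}$; the full-index version is derivable from the sequent $\rightarrow\{\varphi\}\times[1,n]$ of \cite[Proposition~3.4]{KaminskiF21} by $n$ cuts, so this is a gap in exposition only, not in substance. Both routes arrive at the same canonical model and the same Corollary~\ref{c: sigma}-style argument that $M^c\models\bsigma$.
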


The proof of the ``only if'' part of theorem (soundness) is
by induction on the derivation length, and
the proof of the ``if'' part of theorem (strong completeness)
is rather involved and
follows from the canonical model theorem in the next section.

%
\ignore{

The proof of the ``only if'' part of theorem (soundness) is
standard and is presented below.
The proof of the ``if'' part of theorem (strong completeness)
is rather involved and is presented in the next section.

\begin{proof}
\textbf{of the ``only if'' part of Theorem~\ref{t: main}.}
The proof is by induction on
the derivation length of $\Gamma \rightarrow \Delta$ from $\bsigma$.
The cases of sequents from $\bsigma$,
axioms~\eqref{eq: ordinary axiom} and~\eqref{eq: table axiom}, and
the structural rules of inference are clear.

For the case of rules of inference involving modal connectives,
the induction step is as follows.

Let $M=\langle W,R,I \rangle$ be
a Kripke model satisfying $\bsigma$.

Let $(\Box\varphi,k),\Gamma \rightarrow$ be the conclusion
of~\eqref{eq: mvk 1} and
assume to the contrary that
there exists a world $u$ satisfying $(\Box\varphi,k)$ and
every labelled formula in $\Gamma$.
Since $k\neq n$, $S(u) \neq \emptyset$.
Thus, there exists $v \in S(u)$ satisfying $(\varphi,k)$.\footnote{
Actually, the existence of such a world $v$ is all
we need from satisfiability of $(\Box\varphi,k)$.}

We contend that $v$ satisfies no element of $\Gamma^\times$.
Otherwise, $v$ would satisfy a labelled formula in
$\{\psi\} \times \overline{[i_\psi,j_\psi]}$,
for some $\psi$ such that
$(\Box\psi,i_\psi),(\Diamond\psi,j_\psi)\in \Gamma$.
However,
$u$ satisfies both $(\Box\psi,i_\psi)$ and $(\Diamond\psi,j_\psi)$
which, together with $uRv$,
implies $v_{i_\psi}\leq I(v,\psi)\leq v_{j_\psi}$.
Thus, $v$ cannot satisfy an element of
$\{\psi\}\times\overline{[i_\psi,j_\psi]}$,
which proves our contention,
by which
(and the definition of the satisfiability relation $\models$, of course),
$M,v \not\models (\varphi,k)\rightarrow \Gamma^\times$,
contradicting the induction hypothesis
$M \models (\varphi,k) \rightarrow \Gamma^\times$.

The case of~\eqref{eq: mvk 2} is dual to that of~\eqref{eq: mvk 1}.
\end{proof}

}
%

\section{The canonical model theorem and
the proof of the ``if'' part of Theorem~\protect{\ref{t: main}}}
\label{s: proof}

For the proof of the strong completeness of \mvk{},
we extend the definition of provability to
infinite sets of labelled formulas.

For a set of sequents $\bsigma$,
a (not necessarily finite) set of labelled formulas $\Gamma$, and
a finite set of labelled formulas $\Delta$,
we write $\bsigma \vdash \Gamma\rightarrow\Delta$,
if there exists a finite subset $\Gamma^{\prime}$ of $\Gamma$ such that
$\bsigma\vdash\Gamma^{\prime} \rightarrow\Delta$.

A set of labelled formulas $\Gamma$ is called \emph{$\bsigma$-consistent},
if $\bsigma \not\vdash\Gamma \rightarrow$.

A set of sequents $\bsigma$ is called \emph{consistent},
if $\bsigma \not\vdash \rightarrow$.\footnote{
\label{f: equivalent}
Equivalently, $\bsigma$ is consistent,
if there exists a $\bsigma$-consistent set of formulas $\Gamma$.}

\begin{lemma} 
\label{l: zorn}
If $\bsigma \not\vdash \Gamma \rightarrow \Delta$,
then there exists a maximal (with respect to inclusion)
$\bsigma$-consistent set $\Gamma^\prime$ including $\Gamma$
such that $\bsigma \not\vdash \Gamma^\prime \rightarrow \Delta$.
\end{lemma}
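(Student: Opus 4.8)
The plan is to apply Zorn's lemma to the collection of sets that ``work,'' ordered by inclusion. Specifically, let $\mathcal{S}$ be the family of all sets $\Gamma'$ of labelled formulas such that $\Gamma \subseteq \Gamma'$ and $\bsigma \not\vdash \Gamma' \rightarrow \Delta$. Our hypothesis $\bsigma \not\vdash \Gamma \rightarrow \Delta$ says exactly that $\Gamma \in \mathcal{S}$, so $\mathcal{S}$ is nonempty. We want a maximal element of $\mathcal{S}$; then we must check separately that any maximal element is $\bsigma$-consistent.

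First I would verify the chain condition. Let $\mathcal{C} \subseteq \mathcal{S}$ be a nonempty chain and put $\Gamma'' = \bigcup \mathcal{C}$. Clearly $\Gamma \subseteq \Gamma''$. Suppose for contradiction that $\bsigma \vdash \Gamma'' \rightarrow \Delta$. By the definition of provability for infinite left-hand sides, there is a \emph{finite} subset $\Gamma_0 \subseteq \Gamma''$ with $\bsigma \vdash \Gamma_0 \rightarrow \Delta$. Since $\Gamma_0$ is finite and $\mathcal{C}$ is a chain, each of the finitely many elements of $\Gamma_0$ lies in some member of $\mathcal{C}$, and by linearity there is a single $\Gamma^* \in \mathcal{C}$ with $\Gamma_0 \subseteq \Gamma^*$. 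Then $\bsigma \vdash \Gamma_0 \rightarrow \Delta$ gives $\bsigma \vdash \Gamma^* \rightarrow \Delta$ (again by the infinite-provability definition, taking the finite subset $\Gamma_0$), contradicting $\Gamma^* \in \mathcal{S}$. Hence $\Gamma'' \in \mathcal{S}$, so every chain has an upper bound in $\mathcal{S}$, and Zorn's lemma yields a maximal $\Gamma' \in \mathcal{S}$.

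It remains to see that this maximal $\Gamma'$ is $\bsigma$-consistent, i.e.\ $\bsigma \not\vdash \Gamma' \rightarrow$. This is where I would use maximality together with weakening. If $\bsigma \vdash \Gamma' \rightarrow$, then by the $k$-right-weakening rule~\eqref{eq: mvml rw} applied repeatedly we could weaken the empty succedent to $\Delta$, obtaining $\bsigma \vdash \Gamma' \rightarrow \Delta$ (and in the infinite case this is witnessed by the same finite subset of $\Gamma'$), contradicting $\Gamma' \in \mathcal{S}$. So $\Gamma'$ is $\bsigma$-consistent, which completes the proof.

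I do not expect any serious obstacle here; the only point requiring a little care is the interaction between Zorn's lemma and the finitary character of the derivability relation on infinite antecedents, namely that a derivation of $\Gamma'' \rightarrow \Delta$ uses only finitely many labelled formulas and hence already lives inside one member of the chain. Everything else is bookkeeping, plus the one-line observation that right-weakening lets us pass from $\Gamma' \rightarrow$ to $\Gamma' \rightarrow \Delta$.
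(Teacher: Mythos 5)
The paper omits this proof ("straightforward, by Zorn's lemma"), so there is nothing to compare against directly; your Zorn mechanics are the intended ones and are carried out correctly. In particular, the chain argument is right: a derivation of $\Gamma''\rightarrow\Delta$ from the union $\Gamma''$ of a chain uses only a finite antecedent, which already sits inside a single member of the chain, so the union stays in $\mathcal{S}$. The right-weakening observation that $\bsigma\vdash\Gamma'\rightarrow$ would yield $\bsigma\vdash\Gamma'\rightarrow\Delta$ is also correct.

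However, there is a genuine gap at the very end: you prove that the maximal element $\Gamma'$ of $\mathcal{S}$ is \emph{a} $\bsigma$-consistent set, but the lemma asserts that $\Gamma'$ is a \emph{maximal} $\bsigma$-consistent set, i.e.\ maximal among all $\bsigma$-consistent sets, not merely maximal with respect to the property $\bsigma\not\vdash\cdot\rightarrow\Delta$. These are different notions of maximality when $\Delta\neq\emptyset$, and the stronger one is what the paper actually uses: in the proof of the ``if'' part of Theorem~\ref{t: main}, the conclusion ``$\Gamma'\in W_{\sbsigma}$'' requires $\Gamma'$ to be maximal $\bsigma$-consistent, since $W_{\sbsigma}$ is defined as the set of all such sets. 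The missing step is short but not automatic. By Lemma~\ref{l: maximal set property} (applied to your $\Gamma'$, which is maximal for $\bsigma\not\vdash\cdot\rightarrow\Delta$), every formula $\varphi$ carries exactly one label in $\Gamma'$; equivalently, one argues directly that if no $(\varphi,k)$ were in $\Gamma'$ then maximality in $\mathcal{S}$ gives $\bsigma\vdash\Gamma',(\varphi,k)\rightarrow\Delta$ for every $k$, and multi-shift~\eqref{eq: mvml multi-shift} with $K=\{1,\ldots,n\}$ yields $\bsigma\vdash\Gamma'\rightarrow\Delta$, a contradiction. Consequently any proper superset of $\Gamma'$ contains $(\varphi,k)$ and $(\varphi,k')$ with $k\neq k'$, and since $(\varphi,k),(\varphi,k')\rightarrow$ is derivable from axiom~\eqref{eq: ordinary axiom} by right-shift~\eqref{eq: mvml rs}, no proper superset of $\Gamma'$ is $\bsigma$-consistent. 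Only with this addition is $\Gamma'$ a maximal $\bsigma$-consistent set as claimed.
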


The proof is straightforward, by Zorn's lemma, and is omitted.

\begin{lemma}
\label{l: maximal set property}
{\em (\cite[Lemma~3.12 and the following observation]{KaminskiF21})}
If\, $\Gamma$ is a maximal set for which
$\bsigma \not\vdash \Gamma\rightarrow\Delta$,
then for every formula $\varphi$ there exists a unique
$k \in \{1,\ldots,n\}$ such that $(\varphi,k) \in \Gamma$.
\end{lemma}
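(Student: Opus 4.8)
The plan is to establish the two halves separately: \emph{existence} of at least one label $k$ with $(\varphi,k)\in\Gamma$, and \emph{uniqueness} of that label. Throughout, fix a formula $\varphi$ and let $\Gamma$ be maximal among the sets $\Gamma'\supseteq\Gamma$ with $\bsigma\not\vdash\Gamma'\rightarrow\Delta$. (Strictly speaking the statement should be read as: $\Gamma$ is maximal with respect to $\bsigma\not\vdash\Gamma\rightarrow\Delta$, so we may freely use that adding any new labelled formula destroys this property.)

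For \emph{uniqueness}, suppose $(\varphi,k'),(\varphi,k'')\in\Gamma$ with $k'\neq k''$. Since $\Gamma$ already contains these two labelled formulas, the axiom $(\varphi,k')\rightarrow(\varphi,k')$ together with the $k',k''$-right-shift rule~\eqref{eq: mvml rs} gives $(\varphi,k'),(\varphi,k'')\rightarrow$, whence by left-weakenings~\eqref{eq: mvml lw} we get $\Gamma^\circ\rightarrow$ for a finite $\Gamma^\circ\subseteq\Gamma$, i.e.\ $\bsigma\vdash\Gamma\rightarrow$, and then by a cut-free weakening (or directly $k$-left-weakening on the right side) $\bsigma\vdash\Gamma\rightarrow\Delta$ — contradicting the choice of $\Gamma$. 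So at most one label works.

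For \emph{existence}, assume toward a contradiction that $(\varphi,k)\notin\Gamma$ for every $k\in\{1,\dots,n\}$. By maximality of $\Gamma$, for each $k$ the enlarged set $\Gamma\cup\{(\varphi,k)\}$ satisfies $\bsigma\vdash\Gamma\cup\{(\varphi,k)\}\rightarrow\Delta$; unwinding the definition of provability from an infinite left side, there is a finite $\Gamma_k\subseteq\Gamma$ with
\[
\bsigma\vdash\Gamma_k,(\varphi,k)\rightarrow\Delta .
\]
Now apply the multi-shift rule~\eqref{eq: mvml multi-shift} with $K=\{1,\dots,n\}$ (so $\overline{K}=\emptyset$): from the $n$ premises $\{\Gamma_k,(\varphi,k)\rightarrow\Delta\}_{k\in K}$ we conclude $\bigcup_{k}\Gamma_k\rightarrow\Delta$. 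Since $\bigcup_k\Gamma_k$ is a finite subset of $\Gamma$, this yields $\bsigma\vdash\Gamma\rightarrow\Delta$, again contradicting the defining property of $\Gamma$. Hence some label $k$ with $(\varphi,k)\in\Gamma$ must exist.

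The only real subtlety is bookkeeping with the extended (infinite-left-side) notion of provability: one must be careful that ``$\bsigma\vdash\Gamma\cup\{(\varphi,k)\}\rightarrow\Delta$'' delivers a \emph{finite} premise of the form $\Gamma_k,(\varphi,k)\rightarrow\Delta$ (possibly after a left-weakening to reinstate $(\varphi,k)$ if the finite witness happened to omit it), so that multi-shift applies verbatim; this is routine. The rest is a direct instantiation of the derived rules already stated. I expect no genuine obstacle beyond this; the lemma is essentially a packaging of~\eqref{eq: mvml multi-shift} plus~\eqref{eq: mvml rs}, as the cited source~\cite{KaminskiF21} indicates.
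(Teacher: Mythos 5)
Your proof is correct, and it is essentially the argument the paper delegates to \cite{KaminskiF21}: uniqueness via right-shift \eqref{eq: mvml rs} plus weakenings, existence via maximality followed by a case analysis over all $n$ labels. The one point worth tightening is the appeal to multi-shift \eqref{eq: mvml multi-shift} with $K=\{1,\dots,n\}$, whereas the rule is stated for $K\subset\{1,\dots,n\}$; if that is read as a proper subset, apply multi-shift with $K=\{1,\dots,n-1\}$ to get $\bigcup_{k<n}\Gamma_k\rightarrow\Delta,(\varphi,n)$ and then cut \eqref{eq: mvml cut} against $\Gamma_n,(\varphi,n)\rightarrow\Delta$ (equivalently, cut all $n$ premises against the derivable sequent $\rightarrow\{\varphi\}\times[1,n]$). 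Also, the final weakening in the uniqueness half is right-weakening \eqref{eq: mvml rw}, not left-weakening, but that is cosmetic.
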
 

From now on, we enumerate the set of all formulas $\F$ as
$\psi_1,\psi_2,\ldots$.

For a consistent set of sequents $\bsigma$,
the \emph{$\bsigma$-canonical model}
$M_{\sbsigma}
=
\langle
W_{\sbsigma},R_{\sbsigma},I_{\sbsigma}
\rangle$
is defined as follows.

\begin{itemize}
\item
$W_{\sbsigma}$ is the set of all maximal $\bsigma$-consistent sets.
Since $\bsigma$ is consistent, by Lemma~\ref{l: zorn}, 
$W_{\sbsigma}$ is nonempty.
\item
For
worlds $\Gamma^\prime,\Gamma^{\prime\prime} \in W_{\sbsigma}$,
$\Gamma^\prime R_{\sbsigma} \Gamma^{\prime\prime}$
if and only if
for the unique $i_1,i_2,\ldots$, $j_1,j_2,\ldots$, and~$k_1,k_2,\ldots$
such that
\[
(\Box\psi_1,i_1),(\Diamond\psi_1,j_1),(\Box\psi_2,i_2),
(\Diamond\psi_2,j_2),\ldots \in \Gamma^\prime
\]
and
\[
(\psi_1,k_1),(\psi_2,k_2),\ldots \in \Gamma^{\prime\prime}
\]
provided by Lemma~\ref{l: maximal set property},
$i_m \leq k_m\leq j_m$ for all $m = 1,2,\ldots$\footnote{
Equivalently,
$\Gamma^\prime R_{\sbsigma}\Gamma^{\prime\prime}$
if and only if
$(\Gamma^\prime)^\times \cap \Gamma^{\prime\prime} = \emptyset$. }
\item
For $u \in W_{\sbsigma}$ and $p \in \P$,
$I_{\sbsigma}(u,p)$ is the unique value $\bv_k$ such that $(p,k)\in u$.
\end{itemize}

\begin{theorem}
\label{t: main theorem}
{\em (The canonical model theorem)}
For all labelled formulas $(\varphi,k)$ and all $u \in W_{\sbsigma}$,
$(\varphi,k) \in u$ if and only if $M_{\sbsigma},u \models (\varphi,k)$.
\end{theorem}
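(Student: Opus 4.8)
The plan is to prove the canonical model theorem by induction on the structure of the formula $\varphi$, following the standard truth-lemma pattern but adapted to the many-valued, labelled-sequent setting. The base case is $\varphi$ a propositional variable, which holds immediately by the definition of $I_{\sbsigma}(u,p)$ and Lemma~\ref{l: maximal set property}. For the induction step I would distinguish three cases: a propositional connective $\ast$, the modality $\Box$, and the modality $\Diamond$. In each case the key tool is Lemma~\ref{l: maximal set property}, which guarantees that every formula carries a unique label in each world $u$, so that ``$(\varphi,k)\in u$'' and ``$M_{\sbsigma},u\models(\varphi,k)$'' are each functional in $k$; hence it suffices to show that the unique $k$ with $(\varphi,k)\in u$ equals the unique $k$ with $I_{\sbsigma}(u,\varphi)=\bv_k$.

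For the propositional case, suppose $(\varphi_m,k_m)\in u$ for $m=1,\ldots,\ell$; by the induction hypothesis $I_{\sbsigma}(u,\varphi_m)=\bv_{k_m}$, so the recursive clause for $I$ gives $I_{\sbsigma}(u,\ast(\varphi_1,\ldots,\varphi_\ell))=\ast(\bv_{k_1},\ldots,\bv_{k_\ell})=\bv_k$ for the appropriate $k$. To see $(\ast(\varphi_1,\ldots,\varphi_\ell),k)\in u$, use the table axiom~\eqref{eq: table axiom} together with maximality of $u$: if the unique label of $\ast(\varphi_1,\ldots,\varphi_\ell)$ in $u$ were some $k'\neq k$, then from $(\varphi_1,k_1),\ldots,(\varphi_\ell,k_\ell)\in u$ and the axiom $(\varphi_1,k_1),\ldots,(\varphi_\ell,k_\ell)\rightarrow(\ast(\varphi_1,\ldots,\varphi_\ell),k)$ plus the right-shift rule~\eqref{eq: mvml rs} one derives $u\rightarrow$ via cut, contradicting $\bsigma$-consistency of $u$. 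The modal cases are the heart of the argument. For $\Box\varphi$, let $(\Box\varphi,i)\in u$. The ``$\leq$'' direction, $I_{\sbsigma}(u,\Box\varphi)\geq\bv_i$, says every successor $v$ of $u$ satisfies $I_{\sbsigma}(v,\varphi)\geq\bv_i$; this is essentially the definition of $R_{\sbsigma}$ (the condition $i_m\leq k_m$) combined with the induction hypothesis. For the reverse inequality I must produce a successor $v$ of $u$ with $I_{\sbsigma}(v,\varphi)=\bv_i$ (when $i\neq n$), i.e., a maximal $\bsigma$-consistent set $v$ containing $(\varphi,i)$ and containing no element of $u^\times$. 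The natural candidate is to extend $\{(\varphi,i)\}\cup\{(\psi,k):i_\psi\leq k\leq j_\psi \text{ forced}\}$, but the clean move is: show $\bsigma\not\vdash(\varphi,i)\rightarrow u^\times$, then apply Lemma~\ref{l: zorn} to get a maximal $\bsigma$-consistent $v\supseteq\{(\varphi,i)\}$ with $\bsigma\not\vdash v\rightarrow u^\times$, whence $v\cap u^\times=\emptyset$ and so $uR_{\sbsigma}v$ by the footnoted reformulation of the accessibility relation. The non-derivability $\bsigma\not\vdash(\varphi,i)\rightarrow u^\times$ is proved by contraposition: if it failed, rule~\eqref{eq: mvk 1} would yield $\bsigma\vdash(\Box\varphi,i),\Gamma\rightarrow$ for a finite $\Gamma\subseteq u$ (picking the finitely many $\psi$ actually used in $u^\times$, together with their $\Box\psi$ and $\Diamond\psi$ witnesses in $u$), contradicting $\bsigma$-consistency of $u$ since $(\Box\varphi,i)\in u$. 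The case $i=n$ is handled separately, using that $\inf(\emptyset)=\bv_n$ and that $S(u)$ may legitimately be empty. The $\Diamond$ case is entirely dual, using rule~\eqref{eq: mvk 2}, the condition $k_m\leq j_m$, the value $\sup(\emptyset)=\bv_1$, and the separate treatment of $j=1$.

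The main obstacle I anticipate is the modal step, specifically the bookkeeping needed to pass from ``$\bsigma\vdash(\varphi,i)\rightarrow u^\times$'' (an assumption about the full, possibly infinite set $u^\times$) to a genuine finite derivation that can be closed off by rule~\eqref{eq: mvk 1} against a \emph{finite} $\Gamma\subseteq u$. One must be careful that the finite subset of $u^\times$ appearing in the derivation is itself of the form $\Gamma^\times$ for some finite $\Gamma\subseteq u$ — this uses that $u^\times=\bigcup\{\{\psi\}\times\overline{[i_\psi,j_\psi]}:(\Box\psi,i_\psi),(\Diamond\psi,j_\psi)\in u\}$ and that any finite subset touches only finitely many $\psi$'s, each contributing the finite set $\{\psi\}\times\overline{[i_\psi,j_\psi]}$ in full after a harmless weakening. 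A second delicate point is verifying that the $v$ produced by Zorn's lemma really is a \emph{world}, i.e., maximal $\bsigma$-consistent; Lemma~\ref{l: zorn} gives exactly this, so the argument goes through provided one checks $\bsigma$-consistency of $v$, which follows from $\bsigma\not\vdash v\rightarrow\emptyset$ being a special case of $\bsigma\not\vdash v\rightarrow u^\times$ (take $\Delta=u^\times$; consistency is the instance with empty succedent, obtained by weakening). Modulo this careful finiteness accounting, the rest is routine application of the induction hypothesis and the structural rules already available in~\mvk{}.
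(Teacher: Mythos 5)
Your proposal is correct in outline and matches the paper on the reduction to a single direction via Lemma~\ref{l: maximal set property}, on the base and propositional cases, and on the ``every successor has value $\geq \bv_k$'' half of the $\Box$ step. Where you genuinely diverge is in producing the witnessing successor. The paper assumes every assignment $k_1\in[i_1,j_1],k_2\in[i_2,j_2],\ldots$ makes $\{(\varphi^\prime,k),(\psi_1,k_1),(\psi_2,k_2),\ldots\}$ inconsistent, runs a K\"onig's-lemma argument on a tree of partial assignments to extract a uniform finite depth $L$, combines the resulting finitely many derivations with the (long, omitted) derived rule~\eqref{eq: mvml super-multi-shift}, and closes with~\eqref{eq: mvk 1}; it also needs Lemma~\ref{l: i leq j} as a separate preliminary so that the tree ranges over nonempty intervals. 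You instead work on the succedent side: a single contrapositive application of~\eqref{eq: mvk 1}, after completing a finite $\Delta_0\subseteq u^\times$ to a genuine $\Gamma^\times$ for finite $\Gamma\subseteq u$ by right-weakening, shows $\bsigma\not\vdash(\varphi,i)\rightarrow u^\times$, and a Lindenbaum/Zorn extension of $\{(\varphi,i)\}$ avoiding $u^\times$ then yields the successor directly via the footnoted characterization of $R_{\sbsigma}$. This is cleaner: it bypasses K\"onig's lemma and the super-multi-shift rule entirely, and it subsumes Lemma~\ref{l: i leq j}, since $i_\psi>j_\psi$ would put all of $\{\psi\}\times[1,n]$ into $u^\times$ and make a finite subsequent trivially derivable, contradicting consistency of $u$. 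The price is that the relation $\bsigma\vdash\Gamma\rightarrow\Delta$ and Lemmas~\ref{l: zorn} and~\ref{l: maximal set property} are defined in the paper only for \emph{finite} succedents, whereas you invoke them with $\Delta=u^\times$, which is in general infinite; you must therefore extend the notation (derivability of some finite subsequent on both sides) and recheck both lemmas. The chain-union and multi-shift arguments go through verbatim, and the resulting full-labelling property guarantees that your $v$ is a maximal $\bsigma$-consistent set, hence a world. Since you explicitly flag these two delicate points, I count them as presentational obligations rather than gaps.
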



For the proof of Theorem~\ref{t: main theorem} we need the lemma below.

\begin{lemma}
\label{l: i leq j}
Let $\Gamma$ be a $\bsigma$-consistent set of formulas and let
\begin{equation}
\label{eq: in gamma}
(\Box \varphi^\prime ,k) (\Box \psi ,i) , (\Diamond \psi , j) \in \Gamma
\end{equation}
where $k \neq n$.
Then $i \leq j$.
\end{lemma}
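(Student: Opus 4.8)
The plan is to derive a contradiction from the assumption that $j < i$, using the fact that whenever $(\Box\varphi',k) \in \Gamma$ with $k \neq n$, the set $\Gamma$ is ``rich enough'' to force $\varphi'$ to take \emph{some} value, and this value must avoid $\Gamma^\times$; but if $j < i$, then $\overline{[i,j]} = \{1,\ldots,n\}$, which would force $\varphi'$ out of \emph{every} label, contradicting consistency. Concretely, first I would observe that for $k \neq n$, the rule~\eqref{eq: mvk 1} applied to the sequent $(\varphi',k) \rightarrow \Gamma_0^\times$ (where $\Gamma_0 = \{(\Box\psi,i),(\Diamond\psi,j)\}$, so $\Gamma_0^\times = \{\psi\} \times \overline{[i,j]}$) gives $(\Box\varphi',k),(\Box\psi,i),(\Diamond\psi,j) \rightarrow$ once we have $\vdash (\varphi',k) \rightarrow \{\psi\} \times \overline{[i,j]}$.

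Next, under the assumption $j < i$, we have $[i,j] = \emptyset$, hence $\overline{[i,j]} = \{1,\ldots,n\}$, so $\{\psi\}\times\overline{[i,j]} = \{\psi\}\times[1,n]$. By~\cite[Proposition~3.4]{KaminskiF21} (the derivable sequent $\rightarrow \{\psi\}\times[1,n]$, already used in Example~\ref{e: 33}), together with a $k$-left-weakening~\eqref{eq: mvml lw}, we obtain $\vdash (\varphi',k) \rightarrow \{\psi\}\times[1,n]$. Plugging this into rule~\eqref{eq: mvk 1} with $\Gamma := \{(\Box\psi,i),(\Diamond\psi,j)\}$ yields $\bsigma \vdash (\Box\varphi',k),(\Box\psi,i),(\Diamond\psi,j) \rightarrow$. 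Since all three labelled formulas lie in $\Gamma$ by~\eqref{eq: in gamma}, this means $\bsigma \vdash \Gamma \rightarrow$ (taking the relevant finite subset), contradicting the $\bsigma$-consistency of $\Gamma$. Therefore $j \geq i$, i.e.\ $i \leq j$.

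The only subtlety I anticipate is making sure that $\overline{[i,j]} = \{1,\ldots,n\}$ exactly when $j < i$: this is immediate from the definition $\overline{[i,j]} = [1,i-1]\cup[j+1,n]$, since $i - 1 \geq j \geq 1$ implies $[1,i-1] \supseteq [1,j]$ and then together with $[j+1,n]$ this covers $[1,n]$; in fact when $j < i$ the interval $[i,j]$ is empty so its complement is everything. A second point worth stating carefully is the side condition $k \neq n$ on rule~\eqref{eq: mvk 1}, which is exactly the hypothesis of the lemma and is what licenses the application; no constraint on $i$ or $j$ themselves is needed. So the proof is short once the derivable sequent $\rightarrow \{\psi\}\times[1,n]$ is invoked, and there is no real obstacle — the content is entirely in recognizing that an empty ``allowed interval'' for $\varphi'$ collapses consistency via the $\Box$-rule.
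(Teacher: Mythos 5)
Your proof is correct, but it follows a genuinely different (and shorter) route than the paper's. The paper argues by a three-way case split: for $i \neq n$ it cuts the derived sequent $(\Box\psi,i)\rightarrow(\Diamond\psi,i)^+$ (sequent~\eqref{eq: box leq diamond}) against $\Gamma\rightarrow(\Box\psi,i)$ and then eliminates $(\Diamond\psi,i)^+$ by resolutions against $\Gamma\rightarrow(\Diamond\psi,j)$; for $j\neq 1$ it uses the dual sequent~\eqref{eq: diamond geq box}; and only in the residual case $i=n$, $j=1$ does it invoke the hypothesis $k\neq n$ together with the ``dead-end'' sequent~\eqref{eq: dead end} and a final resolution on $(\Box\varphi^\prime,k)$ versus $(\Box\varphi^\prime,n)$. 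You instead observe that $j<i$ makes $[i,j]$ empty, hence $\{(\Box\psi,i),(\Diamond\psi,j)\}^\times=\{\psi\}\times[1,n]$, so the derivable sequent $\rightarrow\{\psi\}\times[1,n]$ plus one left-weakening yields the premise of rule~\eqref{eq: mvk 1} (licensed precisely by $k\neq n$), giving $(\Box\varphi^\prime,k),(\Box\psi,i),(\Diamond\psi,j)\rightarrow$ in a single uniform step. This is essentially the same mechanism the paper uses inside the derivation of~\eqref{eq: dead end} in Example~\ref{e: 33}, but applied directly to the lemma rather than through the intermediate sequents; it avoids the case analysis entirely and makes the role of the hypothesis $k\neq n$ transparent, at the cost of using the formula $\varphi^\prime$ in every case (the paper's first two cases derive the contradiction from $\psi$ alone). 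Both arguments are sound; yours is the more economical.
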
 

\begin{proof}
Assume to the contrary that $i > j$.
We distinguish among the cases of, $i \neq n$, $j \neq 1$, and
$i = n$ and $j = 1$.

If $i \neq n$, then
\[
\begin{array}{lll}
1. & 
\Gamma \rightarrow (\Box \psi,i)
&
\mbox{follows from
axiom~\eqref{eq: ordinary axiom}, with $\varphi$ being $\Box \psi$ and}
\\
& &
\mbox{$k$ being $i$, and~\eqref{eq: in gamma}}
\\
2. & 
\Gamma \rightarrow (\Diamond \psi,j)
&
\mbox{follows from
axiom~\eqref{eq: ordinary axiom}, with $\varphi$ being $\Diamond \psi$
and}
\\
& &
\mbox{$k$ being $j$, and~\eqref{eq: in gamma}}
\\
3. & 
(\Box \psi,i) \rightarrow (\Diamond \psi,i)^+
&
\mbox{\eqref{eq: box leq diamond}
with $\varphi$ being $\psi$ and $k$ being $i$}
\\
4. & 
\Gamma \rightarrow (\Diamond \psi,i)^+
&
\mbox{follows from 1 and 3 by cut}
\\
5. &
\Gamma \rightarrow
&
\mbox{follows from 2 and 4 by
$n - i$ resolutions~\eqref{eq: mvml resolution}}
\end{array}
\]

However,
$\bsigma\vdash \Gamma \rightarrow$ contradicts
the $\bsigma$-consistency of $\Gamma$.
	
The case of $j \neq 1$ is dual to that of $i \neq n$ and is omitted.

Let $i = n$ and $j = 1$.
Then
\[
\begin{array}{lll}
1. & 
\Gamma \rightarrow (\Box \varphi^\prime,k)
&
\mbox{follows from
axiom~\eqref{eq: ordinary axiom},
with $\varphi$ being $\Box \varphi^\prime$}
\\
2. & 
\Gamma \rightarrow (\Box \psi,n)
&
\mbox{follows from
axiom~\eqref{eq: ordinary axiom}, with $\varphi$ being $\Box \psi$ and}
\\
& &
\mbox{$k$ being $n$, and~\eqref{eq: in gamma}}
\\
3. & 
\Gamma \rightarrow (\Diamond \psi,1)
&
\mbox{follows from
axiom~\eqref{eq: ordinary axiom}, with $\varphi$ being $\Diamond \psi$
and}
\\
& &
\mbox{$k$ being $1$, and~\eqref{eq: in gamma}}
\\
4. & 
(\Box \psi,n),(\Diamond \psi,1) \rightarrow (\Box \varphi^\prime,n)
&
\mbox{\eqref{eq: dead end}
with $\varphi$ being $\psi$ and $\psi$ being $\varphi^\prime$}
\\
5. & 
\Gamma \rightarrow (\Box \varphi^\prime,n)
&
\mbox{follows from 2,\,3, and 4 by two cuts}
\\
6. &
\Gamma \rightarrow
&
\mbox{follows from 1 and 5 by resolution~\eqref{eq: mvml resolution}}
\end{array}
\]

Again,
$\bsigma\vdash \Gamma \rightarrow$ contradicts
the $\bsigma$-consistency of $\Gamma$.
\end{proof}

\begin{proof}
\textbf{of Theorem~\ref{t: main theorem}}
It is sufficient to prove the ``only if'' part of the theorem, i.e.,
that $(\varphi,k) \in u$ implies ${M_{\sbsigma},u\models (\varphi,k)}$.
This is because, if $(\varphi,k) \notin u$,
then, by Lemma~\ref{l: maximal set property},
$(\varphi,k^\prime)\in u$ for $k^\prime\neq k$.
Therefore by the ``only if'' part of the theorem,
$M_{\sbsigma},u \models (\varphi,k^\prime)$,
implying ${M_{\sbsigma},u \nvDash (\varphi,k)}$.

The proof is by induction on the complexity of $\varphi$.
For the cases of
an atomic formula and a propositional principal connective,
see~\cite[Proposition~3.13]{KaminskiF21}.

Let $\varphi$ be of the form $\Box\varphi^\prime$ and
assume that for some $i_1,i_2,\ldots$ and $j_1,j_2,\ldots$,
\[
(\Box\psi_1,i_1),(\Diamond\psi_1,j_1),
(\Box\psi_2,i_2),(\Diamond\psi_2,j_2),\ldots \in u
\]
We distinguish between the cases of $k \neq n$ and $k = n$.
\vspace{0.2 em}

\par \noindent
$\bullet$
Let $k\neq n$.
By the induction hypothesis and
the definition of $R_{\sbsigma}$,
for each world $v \in S(u)$,
$\bv_k \leq I_{\sbsigma}(v,\varphi^\prime)$.\footnote{
This is because, for some $m = 1,2,\ldots,$,
$\varphi^\prime$ is $\psi_m$.}
Therefore, for the proof of
\[
\bv_k
=
\min(\{I_{\sbsigma}(v,\varphi^\prime) : v\in S(u)\})
=
I_{\sbsigma}(u,\varphi)
\]
it suffices to show that
\[
\bv_k \in \{ I_{\sbsigma}(v,\varphi^\prime) : v\in S(u)\}
\]
i.e., that there exist a world $v \in W_{\sbsigma}$ and
$k_1 \in [i_1,j_1],k_2 \in [i_2,j_2],\ldots$ such that
\[
(\varphi^\prime,k),(\psi_1,k_1),(\psi_2,k_2),\ldots\in v
\]
This is because, by definition of $R_{\sbsigma}$, $uR_{\sbsigma}v$ and,
by the induction hypothesis, $I_{\sbsigma}(v,\varphi^\prime)=v_k$.

By Lemma~\ref{l: zorn}, for existence of such $k_1,k_2,\ldots$ and $v$,
it suffices to show that there exist
$k_1\in [i_1,j_1],k_2\in [i_2,j_2],\ldots$ such that
the set of labelled formulas
\begin{equation}
\label{eq: sigma consistent}
\{ (\varphi^\prime,k),(\psi_1,k_1),(\psi_2,k_2),\ldots \}
\end{equation}
is $\bsigma$-consistent.

For the proof, assume to the contrary that for all
$k_1\in [i_1,j_1],k_2\in [i_2,j_2],\ldots$,~\eqref{eq: sigma consistent}
is
$\bsigma$-inconsistent.
That is,
\begin{equation}
\label{eq: set of sequents}
\bsigma
\vdash
(\varphi^\prime,k),(\psi_1,k_1),(\psi_2,k_2),\ldots \rightarrow
\, \ \ \ \
k_1\in [i_1,j_1],k_2\in [i_2,j_2],\ldots
\end{equation}
Note that, by Lemma~\ref{l: i leq j}, $i_m \leq j_m$,
for all $m = 1,2,\ldots$.
Thus, the set of sequents in~\eqref{eq: set of sequents} is nonempty.

We contend that there exists a non-negative integer $L$ such that
\begin{equation}
\label{eq: set of sequents L}
\begin{array}{c}
\bsigma
\vdash
(\varphi^\prime,k),(\psi_1,k_1),(\psi_2,k_2),\ldots,(\psi_L,k_L)
\rightarrow
~~~~~~~~~~~~~~~~~~~~
\\
\hspace{5.5 em}
k_1 \in [i_1,j_1],k_2 \in [i_2,j_2],\ldots,k_L \in [i_L,j_L]
\end{array}
\end{equation}
Then we shall apply
rules~\eqref{eq: mvml super-multi-shift} and~\eqref{eq: mvk 1}
to the set of sequents in~\eqref{eq: set of sequents L}.

For the proof of our contention,
we consider a tree $T$ whose nodes are sets of
labelled formulas of the form
\begin{equation}
\label{eq: nodes}
\begin{array}{c}
\{
(\psi_1,k_1),(\psi_2,k_2),\ldots,(\psi_m,k_m) \}
~~~~~~~~~~~~~~~~~~~~~~~~~~~~~~~~
\\
\hspace{5.5 em}
k_1 \in [i_1,j_1],k_2 \in [i_2,j_2],\ldots,k_m \in [i_m,j_m] , \
\end{array}
\end{equation}
$m = 0,1,\ldots$, such that each node \eqref{eq: nodes} is $\bsigma$-consistent when $(\varphi^\prime,k)$ is added to it as an element,
and
the successors of a node~\eqref{eq: nodes} are nodes of the form
\[
\{
(\psi_1,k_1),(\psi_2,k_2),\ldots,
(\psi_m,k_m),(\psi_{m+1},k_{m+1})
\}
\]
where $k_{m+1} \in [i_{m+1},j_{m+1}]$.

Thus, nodes~\eqref{eq: nodes} are of height $m$.
In particular, the root of $T$ is $\emptyset$,
if $\{(\varphi^\prime,k)\}$ is $\bsigma$-consistent.
Otherwise, $T$ is empty.

This tree $T$ is of a finite branching degree, because a node
of height $m$ has at most $j_{m+1}-i_{m+1}+1$ successors.
Also, $T$ has no infinite paths.
Indeed, an infinite path would correspond to a choice of
$k_1\in [i_1,j_1],k_2\in [i_2,j_2],\ldots$.
However, the set of labelled formulas
$\{(\varphi^\prime,k),(\psi_1,k_1),(\psi_2,k_2),\ldots\}$
is $\bsigma$-inconsistent.
Thus, the path contains a node that becomes $\bsigma$-inconsistent,
when $(\varphi^\prime,k)$ is added to it,
in contradiction with the definition of $T$.
Therefore,
by the contraposition of the K\"{o}nig infinite lemma~\cite{Konig26},
$T$ is finite.

Let $H$ be the height of $T$ ($H$ is defined as $-1$, if $T$ is empty).
Then, for $L = H + 1$, we have~\eqref{eq: set of sequents L},
which proves our contention.

Now, from~\eqref{eq: set of sequents L},
by~\eqref{eq: mvml super-multi-shift} we obtain
\[
\bsigma\vdash (\varphi^\prime,k)
\rightarrow
\{\psi_1\} \times \overline{[i_1,j_1]},
\{\psi_2\} \times \overline{[i_2,j_2]},
\ldots,
\{\psi_L\}\times\overline{[i_L,j_L]}
\]
from which, by~\eqref{eq: mvk 1} we obtain
\[
\bsigma
\vdash
(\Box\varphi^\prime,k),
(\Box\psi_1,i_1),(\Diamond\psi_1,j_1),
(\Box\psi_2,i_2),(\Diamond\psi_2,j_2),
\ldots,
(\Box\psi_L,i_L),(\Diamond\psi_L,j_L)
\rightarrow
\]
that contradicts the $\bsigma$-consistency of $u$.
\vspace{0.2 em}

\par \noindent
$\bullet$
Let $k= n$.
If $S(u) = \emptyset$, then, trivially,
$M_{\sbsigma},u \models (\Box \varphi,k)$.
Otherwise,
by the induction hypothesis and the definition of $R_{\sbsigma}$,
for all worlds $v \in S(u)$,
we have $\bv_n \leq I_{\sbsigma}(v,\varphi^\prime)$,
implying $I_{\sbsigma}(v,\varphi^\prime) = \bv_n$.
Thus,
\[
\min(\{I_{\sbsigma}(v,\varphi^\prime) : v \in S(u)\})
=
\min(\{\bv_n\})
=
\bv_n
\]
and $M_{\sbsigma},u \models (\varphi,k)$ follows.
\vspace{0.3 em}

The case of $\Diamond$ is dual to that of $\Box$.
We just replace $\Box$ with $\Diamond$, $\Diamond$ with $\Box$,
$\min$ with $\max$, $n$ with $1$, and $\leq$ with $\geq$.
We leave the details to the reader.
\end{proof}

\begin{corollary}
\label{c: sigma}
We have $M_{\sbsigma} \models \bsigma$.
\end{corollary}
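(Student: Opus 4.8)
The plan is to derive Corollary~\ref{c: sigma} directly from the canonical model theorem (Theorem~\ref{t: main theorem}) together with the definition of the satisfiability relation~$\models$ on sequents. We must show that every world $u \in W_{\sbsigma}$ satisfies every sequent $\Gamma \rightarrow \Delta$ in $\bsigma$. So fix such a $u$ and such a sequent $\Gamma \rightarrow \Delta \in \bsigma$.

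First I would unwind the definition of $M_{\sbsigma},u \models \Gamma \rightarrow \Delta$: we assume that for every $(\varphi,k) \in \Gamma$ we have $I_{\sbsigma}(u,\varphi) = \bv_k$, i.e., $M_{\sbsigma},u \models (\varphi,k)$, and we must produce some $(\varphi,k) \in \Delta$ with $M_{\sbsigma},u \models (\varphi,k)$. By Theorem~\ref{t: main theorem}, the hypothesis gives $(\varphi,k) \in u$ for every $(\varphi,k) \in \Gamma$; that is, $\Gamma \subseteq u$. Now suppose toward a contradiction that no $(\varphi,k) \in \Delta$ satisfies $M_{\sbsigma},u \models (\varphi,k)$; again by Theorem~\ref{t: main theorem}, this means $(\varphi,k) \notin u$ for every $(\varphi,k) \in \Delta$, so $\Delta \cap u = \emptyset$. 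The key step is then to observe that, since $\Gamma \rightarrow \Delta \in \bsigma$, we have $\bsigma \vdash \Gamma \rightarrow \Delta$ trivially (the sequent is an assumption), and by repeated applications of $k$-left-weakening~\eqref{eq: mvml lw} we get $\bsigma \vdash u \rightarrow \Delta$ (adding the remaining labelled formulas of $u \setminus \Gamma$ on the left; here $\bsigma \vdash$ is understood in the extended sense for the possibly infinite left-hand side $u$, witnessed by the finite subset $\Gamma$).

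The crux is to turn $\Delta \cap u = \emptyset$ into a derivation of $\bsigma \vdash u \rightarrow$, contradicting the $\bsigma$-consistency of $u$ (which holds because $u \in W_{\sbsigma}$). For each $(\varphi,k) \in \Delta$, since $(\varphi,k) \notin u$, Lemma~\ref{l: maximal set property} gives a unique $k' \neq k$ with $(\varphi,k') \in u$; hence $\bsigma \vdash u \rightarrow (\varphi,k')$ by axiom~\eqref{eq: ordinary axiom} (and the extended notion of $\vdash$). Applying $k',k$-right-shift~\eqref{eq: mvml rs} yields $\bsigma \vdash u \rightarrow$ from $\bsigma \vdash u \rightarrow (\varphi,k)$... more precisely, I would proceed by cutting: from $\bsigma \vdash u \rightarrow \Delta$ and, for each $(\varphi,k) \in \Delta$, $\bsigma \vdash u,(\varphi,k) \rightarrow$ (the latter obtained from $\bsigma \vdash u \rightarrow (\varphi,k')$ via $k',k$-right-shift~\eqref{eq: mvml rs}, noting $k' \neq k$), finitely many applications of $k$-cut~\eqref{eq: mvml cut} eliminate the labelled formulas of $\Delta$ one by one and give $\bsigma \vdash u \rightarrow$. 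This contradicts $\bsigma \not\vdash u \rightarrow$, i.e., the $\bsigma$-consistency of $u$.

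The main obstacle is purely bookkeeping: handling the fact that $u$ is infinite, so every use of $\vdash$ must be read through the finite-subset convention introduced at the start of Section~\ref{s: proof}, and making sure the finitely many labelled formulas actually used (the ones in $\Gamma$, the ones in $\Delta$, and the finitely many witnesses $(\varphi,k') \in u$) form a single finite subset of $u$ for which the ordinary derivation goes through. There is no deep difficulty here; once the definitions are unwound, the result is immediate, and I would present it as a short direct argument rather than belabour the finitary reductions.
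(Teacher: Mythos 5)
Your proof is correct and takes essentially the same route as the paper: both use Theorem~\ref{t: main theorem} to convert satisfaction at $u$ into membership in $u$, and then derive $\bsigma \vdash u \rightarrow$ to contradict the $\bsigma$-consistency of $u$. The only difference is cosmetic --- the paper obtains the empty-succedent sequent by applying $k_\varphi,k_\varphi^\prime$-right-shifts~\eqref{eq: mvml rs} directly to $\Gamma \rightarrow \Delta$, whereas you reach it via the auxiliary sequents $u,(\varphi,k)\rightarrow$ and finitely many cuts~\eqref{eq: mvml cut}.
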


\begin{proof}
Let $u \in W_{\sbsigma}$ and let $\Gamma \rightarrow \Delta \in \bsigma$.
Assume that $u$ satisfies all labelled formulas in $\Gamma$ and
assume to the contrary that $u$ satisfies
no labelled formula in ${\Delta}$.
By Theorem \ref{t: main theorem}, ${\Gamma}\subseteq u$, and
for all $(\varphi,k_\varphi) \in \Delta$ there is
$k_\varphi^\prime \neq k_\varphi$ such that
$(\varphi,k_\varphi^\prime) \in u$.
By definition, $\bsigma \vdash \Gamma \rightarrow \Delta$.
Therefore,
by $k_\varphi,k_\varphi^\prime$-right-shifts~\eqref{eq: mvml rs},
\[
\bsigma
\vdash
\Gamma ,
\{(\varphi,k_\varphi^\prime) : (\varphi,k_\varphi)\in{\Delta}\}
\rightarrow
\]
implying $\bsigma \vdash u \rightarrow$, because
$\Gamma ,
\{(\varphi,k_\varphi^\prime) : (\varphi,k_\varphi)\in{\Delta}\}
\subseteq
u$.
This, however, contradicts $\bsigma$-consistency of $u$.
\end{proof}

\begin{proof}
\textbf{of the ``if'' part of Theorem~\ref{t: main}}
Assume $\bsigma \not\vdash \Gamma\rightarrow\Delta$.
By Lemma \ref{l: zorn}, there exists a maximal set $\Gamma^\prime$
including $\Gamma$ such that
$\bsigma \not\vdash \Gamma^\prime\rightarrow\Delta$.
By the definition of $M_{\sbsigma}$, $\Gamma^\prime \in W_{\sbsigma}$.
We contend that $M_{\sbsigma} \not\models \Gamma \rightarrow\Delta$.
Namely, $M_{\sbsigma},\Gamma^\prime \not\models \Gamma\rightarrow\Delta$. 

Since $\Gamma \subseteq\Gamma^\prime$,
by Theorem~\ref{t: main theorem},
$\Gamma^\prime$ satisfies all labelled formulas in $\Gamma$.
However, it satisfies no labelled formula in $\Delta$, because,
otherwise, by Theorem~\ref{t: main theorem},
such a formula would belong to $\Gamma^\prime$, implying
$\bsigma \vdash \Gamma^\prime \rightarrow \Delta$,
in contradiction with the definition of $\Gamma^\prime$.
Thus, $M_{\sbsigma} \not\models \Gamma \rightarrow\Delta$,
which completes the proof of our contention and,
together with Corollary~\ref{c: sigma},
completes the proof of the ``if'' part of the theorem.
\end{proof}

\section{Extensions of \mvk{}}
\label{s: extensions}

In this section, $\bL$ is an extension of \mvk{} with additional axioms.

We write $\bsigma \vdash_{\sbl} \Gamma \rightarrow \Delta$,
if $\Gamma \rightarrow \Delta$ is derivable from $\bsigma$ in $\bL$
(and we keep writing $\bsigma \vdash \Gamma \rightarrow \Delta$,
if $\bL$ is \mvk{} itself).
We generalize this notation to sequents $\Gamma \rightarrow \Delta$ with
an infinite antecedent $\Gamma$, like in the previous section.

Clearly, the results of the previous section apply also to
any extension $\bL$.
Below, we just rewrite them with respect to $\bL$.

\begin{definition}
\label{d: l-sigma-consistent}
A set of labeled formulas $\Gamma$ is called
\emph{$\bL$-$\bsigma$-consistent},
if ${\bsigma \not\vdash_{\sbl} \Gamma\rightarrow}$.
\end{definition}

\begin{definition}
\label{d: l-consistent}
A set of sequents $\bsigma$ is called \emph{L-consistent},
if $\bsigma \not\vdash_{\sbl} \rightarrow$, or,
equivalently, if there exists an $\bL$-$\bsigma$-consistent set,
cf. footnote~\ref{f: equivalent}.
\end{definition}

\begin{lemma}
\label{l: zorn l}
{\em (Cf. Lemma~\ref{l: zorn}.)}
If $\bsigma \not\vdash_{\sbl} \Gamma\rightarrow\Delta$,
then there exists a maximal $\bL$-$\bsigma$-consistent set
$\Gamma^\prime$ including $\Gamma$ such that
$\bsigma \not\vdash_{\sbl} \Gamma^\prime \rightarrow \Delta$.
\end{lemma}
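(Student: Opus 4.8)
The statement to prove is Lemma~\ref{l: zorn l}, which is explicitly flagged as a cf.\ of Lemma~\ref{l: zorn}, and the latter's proof is described as ``straightforward, by Zorn's lemma.'' So the plan is to replay that Zorn's lemma argument verbatim with $\vdash$ replaced by $\vdash_{\sbl}$ throughout, checking that the one structural fact actually used — that a union of a chain of $\bL$-$\bsigma$-consistent sets avoiding $\Delta$ on the right still avoids $\Delta$ — survives the passage to the extension $\bL$.

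\textbf{Setup.} Consider the collection
\[
\mathcal C
=
\{ \Gamma'' : \Gamma \subseteq \Gamma'' \text{ and } \bsigma \not\vdash_{\sbl} \Gamma'' \rightarrow \Delta \},
\]
partially ordered by inclusion. It is nonempty since $\Gamma \in \mathcal C$ by hypothesis. I would then show every chain $\{\Gamma_\alpha\}$ in $\mathcal C$ has an upper bound in $\mathcal C$, namely $\Gamma^* = \bigcup_\alpha \Gamma_\alpha$. Clearly $\Gamma \subseteq \Gamma^*$ and $\Gamma^*$ contains each $\Gamma_\alpha$, so the only thing to check is $\bsigma \not\vdash_{\sbl} \Gamma^* \rightarrow \Delta$. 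Suppose otherwise; then by the definition of provability from an infinite antecedent (extended to $\bL$ exactly as in Section~\ref{s: proof}), there is a \emph{finite} subset $\Gamma_0 \subseteq \Gamma^*$ with $\bsigma \vdash_{\sbl} \Gamma_0 \rightarrow \Delta$. Since $\Gamma_0$ is finite and $\{\Gamma_\alpha\}$ is a chain, $\Gamma_0 \subseteq \Gamma_\beta$ for some single $\beta$, whence $\bsigma \vdash_{\sbl} \Gamma_\beta \rightarrow \Delta$, contradicting $\Gamma_\beta \in \mathcal C$. So $\Gamma^* \in \mathcal C$, and Zorn's lemma yields a maximal element $\Gamma'$ of $\mathcal C$: this is a set including $\Gamma$ with $\bsigma \not\vdash_{\sbl} \Gamma' \rightarrow \Delta$ and maximal with that property.

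\textbf{Maximal $\Rightarrow$ $\bL$-$\bsigma$-consistent.} It remains to note that $\Gamma'$ is automatically $\bL$-$\bsigma$-consistent, i.e.\ $\bsigma \not\vdash_{\sbl} \Gamma' \rightarrow$. Indeed, if $\bsigma \vdash_{\sbl} \Gamma' \rightarrow$, then by $k$-right-weakening~\eqref{eq: mvml rw} applied to each labelled formula of $\Delta$ we get $\bsigma \vdash_{\sbl} \Gamma' \rightarrow \Delta$, contradicting $\Gamma' \in \mathcal C$. (Weakening is a rule of \mvk{}, hence of $\bL$, so this step transfers unchanged; the case $\Delta = \emptyset$ is vacuous since then $\bL$-$\bsigma$-consistency is literally the defining condition.) This is precisely the ``maximal $\bL$-$\bsigma$-consistent'' claimed in the statement.

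\textbf{Main obstacle.} There is essentially no obstacle: the proof is routine. The only point needing a word of care is that the ``finite subset'' clause in the definition of $\bsigma \vdash_{\sbl} \Gamma^* \rightarrow \Delta$ must be available for $\bL$ — but the excerpt explicitly says ``We generalize this notation to sequents $\Gamma \rightarrow \Delta$ with an infinite antecedent $\Gamma$, like in the previous section,'' so this is granted. Accordingly I would present the proof in two or three sentences, or simply state, as the paper does for Lemma~\ref{l: zorn}, that it is straightforward by Zorn's lemma and omit the details.
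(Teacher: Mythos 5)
Your proof is correct and is exactly the argument the paper intends: the paper omits the proof of both Lemma~\ref{l: zorn} and this lemma, noting only that it is straightforward by Zorn's lemma, and your chain-union argument (using the finite-antecedent definition of $\vdash_{\sbl}$ for the upper-bound step) together with the weakening observation is the standard way to fill that in. The one cosmetic point is that Zorn's lemma as you apply it gives maximality with respect to the property $\bsigma \not\vdash_{\sbl} \Gamma'' \rightarrow \Delta$ rather than literally ``maximal $\bL$-$\bsigma$-consistent''; these coincide because, by Lemma~\ref{l: maximal set property l}, any proper superset of such a $\Gamma^\prime$ contains two distinct labels for the same formula and is therefore $\bL$-$\bsigma$-inconsistent, which is exactly how the paper uses the lemma downstream.
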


\begin{lemma}
\label{l: maximal set property l}
{\em (Cf. Lemma~\ref{l: maximal set property}.)}
If $\Gamma$ is a maximal set for which
$\bsigma\not\vdash_{\sbl}\Gamma\rightarrow\Delta$,
then for every formula $\varphi \in \F$ there exists
a unique $k \in \{1,\ldots,n\}$ such that $(\varphi,k)\in \Gamma$.
\end{lemma}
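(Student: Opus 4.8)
The plan is to reduce Lemma~\ref{l: maximal set property l} to the corresponding statement for \mvk{}, namely Lemma~\ref{l: maximal set property}, since the only feature of \mvk{} used in the proof of that lemma (as cited from~\cite[Lemma~3.12]{KaminskiF21}) is the presence of the structural rules together with the table axioms, all of which are retained in any extension $\bL$. Concretely, let $\Gamma$ be a maximal set for which $\bsigma \not\vdash_{\sbl} \Gamma \rightarrow \Delta$. First I would establish \emph{existence} of a suitable label: fix a formula $\varphi \in \F$ and suppose, toward a contradiction, that $(\varphi,k) \notin \Gamma$ for every $k \in \{1,\ldots,n\}$. By maximality of $\Gamma$, for each $k$ the strictly larger set $\Gamma \cup \{(\varphi,k)\}$ fails the defining property, i.e. $\bsigma \vdash_{\sbl} \Gamma , (\varphi,k) \rightarrow \Delta$ for all $k = 1,\ldots,n$. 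Applying the multi-shift rule~\eqref{eq: mvml multi-shift} with $K = \{1,\ldots,n\}$ (so that $\overline{K} = \emptyset$) to these $n$ derivations yields $\bsigma \vdash_{\sbl} \Gamma \rightarrow \Delta$ (after collecting the finitely many finite subsets of $\Gamma$ involved into one finite subset), contradicting the choice of $\Gamma$. Hence some label $k$ with $(\varphi,k) \in \Gamma$ exists.

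Next I would prove \emph{uniqueness}. Suppose $(\varphi,k'),(\varphi,k'') \in \Gamma$ with $k' \neq k''$. Starting from the instance $(\varphi,k') \rightarrow (\varphi,k')$ of axiom~\eqref{eq: ordinary axiom}, apply the $k',k''$-right-shift rule~\eqref{eq: mvml rs} to obtain $(\varphi,k'),(\varphi,k'') \rightarrow$, which is derivable outright in $\bL$. Since both $(\varphi,k')$ and $(\varphi,k'')$ lie in $\Gamma$, weakening gives $\bsigma \vdash_{\sbl} \Gamma \rightarrow$, and then $k$-right-weakenings~\eqref{eq: mvml rw} (one for each element of $\Delta$) give $\bsigma \vdash_{\sbl} \Gamma \rightarrow \Delta$, again contradicting the choice of $\Gamma$. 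Combining the two parts, for every $\varphi \in \F$ there is exactly one $k$ with $(\varphi,k) \in \Gamma$.

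Since none of these derivations use any axiom or rule beyond those already present in \mvk{}, the argument is literally the one behind Lemma~\ref{l: maximal set property}, now carried out with $\vdash_{\sbl}$ in place of $\vdash$; formally one may simply invoke Lemma~\ref{l: maximal set property} after observing that any extension $\bL$ proves every sequent \mvk{} proves, so $\bsigma$-inconsistency in \mvk{} implies $\bL$-$\bsigma$-inconsistency, and that the maximality hypothesis transfers verbatim. The only point requiring a word of care — and the one I would flag as the mild obstacle — is the passage from provability of infinitely many sequents $\Gamma,(\varphi,k)\rightarrow\Delta$ (where $\Gamma$ may be infinite) to a single derivation of $\Gamma\rightarrow\Delta$: one must recall that, by the extended notion of provability, each such derivation uses only a finite subset of $\Gamma$, and since there are only finitely many values of $k$, the union of these finitely many finite subsets is a finite subset of $\Gamma$ to which multi-shift applies. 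This bookkeeping is routine and I would state it in one sentence rather than belabor it.
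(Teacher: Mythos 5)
Your direct argument --- existence of a label via maximality plus the multi-shift~\eqref{eq: mvml multi-shift} over all $n$ labels, uniqueness via axiom~\eqref{eq: ordinary axiom}, a right-shift~\eqref{eq: mvml rs} and weakenings --- is correct, uses only rules present in every extension $\bL$, and is essentially the argument the paper delegates to \cite{KaminskiF21} for \mvk{} and then declares to carry over to $\bL$ because extensions only add axioms. One caveat on your closing remark: maximality with respect to $\bsigma\not\vdash_{\sbl}\Gamma\rightarrow\Delta$ does \emph{not} transfer verbatim to maximality with respect to $\bsigma\not\vdash\Gamma\rightarrow\Delta$ in \mvk{} (a proper superset of $\Gamma$ could be refutable in $\bL$ without being refutable in \mvk{}), so Lemma~\ref{l: maximal set property} cannot literally be invoked as a black box; this is harmless here only because your self-contained derivation already runs entirely in $\vdash_{\sbl}$.
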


For an $\bL$-consistent set of sequents $\bsigma$,
we define the \emph{$\bL$-$\bsigma$-canonical model}
$M_{\sbl,\sbsigma}
=
\langle
W_{\sbl,\sbsigma},
 R_{\sbl,\sbsigma},
I_{\sbl,\sbsigma}
\rangle$
just like the $\bsigma$-canonical model $M_{\sbsigma}$
in Section~\ref{s: proof},
except that $W_{\sbl,\sbsigma}$ is the set of
all maximal $\bL$-$\bsigma$-consistent sets.
Note that $W_{\sbl,\sbsigma}$ is nonempty,
because $\bsigma$ is $\bL$-consistent. 

\begin{corollary}
\label{c: extensions}
For an $\bL$-consistent set of sequents $\bsigma$ the following holds.
\begin{itemize}
\item[$(i)$]
{\em (Cf. Theorem~\ref{t: main theorem}.)}
For all labelled formulas $(\varphi,k)$ and all $u \in W_{\sbl,\sbsigma}$,
$(\varphi,k) \in u$
if and only if
$M_{\sbl,\sbsigma},u \models (\varphi,k)$.
\item[$(ii)$]
{\em (Cf. Corollary~\ref{c: sigma}.)}
$M_{\sbl,\sbsigma} \models \bsigma$.
\item[$(iii)$]
{\em (Cf. the ``if'' part of Theorem~\ref{t: main}.)}
If\, $\bsigma \not\vdash_{\sbl} \Gamma\rightarrow\Delta$,
then $M_{\sbl,\sbsigma} \not\models \Gamma\rightarrow\Delta$. 
\end{itemize}
\end{corollary}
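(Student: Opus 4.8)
The plan is to obtain all three parts of Corollary~\ref{c: extensions} by replaying, essentially verbatim, the arguments of Section~\ref{s: proof}, with $\vdash$ systematically replaced by $\vdash_{\sbl}$, ``$\bsigma$-consistent'' by ``$\bL$-$\bsigma$-consistent'', $M_{\sbsigma}$ by $M_{\sbl,\sbsigma}$, and $W_{\sbsigma},R_{\sbsigma},I_{\sbsigma}$ by their $\bL$-subscripted counterparts. The single observation that makes this work is monotonicity of derivability under adjoining axioms: since $\bL$ is obtained from \mvk{} by adding axioms, every \mvk{}-derivation is an $\bL$-derivation, so $\bsigma \vdash \Gamma \rightarrow \Delta$ implies $\bsigma \vdash_{\sbl} \Gamma \rightarrow \Delta$. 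Consequently every axiom, derivable sequent, and rule of inference invoked in Section~\ref{s: proof}~--- in particular the logical rules~\eqref{eq: mvk 1} and~\eqref{eq: mvk 2}, the structural rules, the derived rules~\eqref{eq: mvml multi-shift} and~\eqref{eq: mvml super-multi-shift}, and the derivable sequents~\eqref{eq: box leq diamond} and~\eqref{eq: dead end}~--- remains available when reasoning in $\bL$.

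First I would re-establish the $\bL$-analogue of Lemma~\ref{l: i leq j}: if $\Gamma$ is $\bL$-$\bsigma$-consistent and $(\Box\varphi^\prime,k),(\Box\psi,i),(\Diamond\psi,j)\in\Gamma$ with $k\neq n$, then $i\leq j$. Its proof is word-for-word that of Lemma~\ref{l: i leq j}: the displayed derivations use only axiom~\eqref{eq: ordinary axiom}, the derivable sequents~\eqref{eq: box leq diamond} and~\eqref{eq: dead end}, and cut/resolution, all of which are $\bL$-derivable by the monotonicity remark, while the concluding contradiction is now with the $\bL$-$\bsigma$-consistency of $\Gamma$. With this in hand, part~$(i)$ is proved by induction on the complexity of $\varphi$ exactly as in Theorem~\ref{t: main theorem}; the atomic and propositional cases are as in~\cite[Proposition~3.13]{KaminskiF21}. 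For $\varphi=\Box\varphi^\prime$ with $k\neq n$, one runs the same K\"{o}nig's-lemma tree argument~--- which is purely combinatorial and indifferent to the choice of logic~--- to extract a finite $L$ with~\eqref{eq: set of sequents L} (now phrased with $\vdash_{\sbl}$), then applies~\eqref{eq: mvml super-multi-shift} and~\eqref{eq: mvk 1} to contradict the $\bL$-$\bsigma$-consistency of $u$; the definition of $R_{\sbl,\sbsigma}$, identical in form to that of $R_{\sbsigma}$, is what lets the induction hypothesis produce the required successor world via Lemma~\ref{l: zorn l}. The case $k=n$ and the dual cases for $\Diamond$ are unchanged.

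Part~$(ii)$ then follows exactly as Corollary~\ref{c: sigma}, using $k_\varphi,k_\varphi^\prime$-right-shifts~\eqref{eq: mvml rs} and part~$(i)$ in place of Theorem~\ref{t: main theorem}; and part~$(iii)$ follows exactly as the ``if'' part of Theorem~\ref{t: main}, using Lemma~\ref{l: zorn l} to build a maximal $\Gamma^\prime\supseteq\Gamma$ with $\bsigma\not\vdash_{\sbl}\Gamma^\prime\rightarrow\Delta$, noting $\Gamma^\prime\in W_{\sbl,\sbsigma}$, and invoking part~$(i)$ to conclude that $\Gamma^\prime$ satisfies all labelled formulas in $\Gamma$ and none in $\Delta$.

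Honestly there is no genuine obstacle here; the content of the corollary is precisely that nothing in Section~\ref{s: proof} used any feature of \mvk{} beyond the availability of its rules and of the particular derivable sequents cited, all of which persist in any extension. The only point requiring attention is bookkeeping: every occurrence of ``consistent'' must be read as ``$\bL$-$\bsigma$-consistent'' and the canonical worlds must be taken maximal with respect to $\vdash_{\sbl}$, so that the terminal contradictions are with $\bL$-$\bsigma$-consistency. Since the Section~\ref{s: proof} arguments never appeal to the \emph{non}-derivability of anything in \mvk{}~--- only to the consistency of the sets being constructed~--- passing to a logic with more theorems cannot invalidate a single step.
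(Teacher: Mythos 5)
Your proposal is correct and coincides with the paper's own (largely implicit) argument: the paper simply asserts that ``the results of the previous section apply also to any extension $\bL$,'' and your monotonicity observation~--- that adding axioms preserves every derivation, derivable sequent, and derived rule used in Section~\ref{s: proof}, while all terminal contradictions are with ($\bL$-$\bsigma$-)consistency rather than with any non-derivability claim~--- is exactly the justification the authors leave to the reader. No further comment is needed.
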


We proceed with some extensions of \mvk{} which are sound and
strongly complete for the many-valued Kripke models defined below.

\begin{definition}
\label{d: serial-eucledian}
A binary relation $R \subseteq W \times W$
is called \emph{serial} (or \emph{with no dead-ends}),
if for all $u \in W$, $S(u) \neq \emptyset$, and
is called {\em Eucledian},
if for all $u,v,w \in W$, $uRv$ and $uRw$ imply $vRw$.
\end{definition}

\begin{definition}
A many-valued Kripke model $M=\langle W,R,I \rangle$ is called
\emph{serial/ reflexive/ transitive/ symmetric/ Euclidean},
if the accessibility relation $R$ is
serial/ reflexive/ transitive/ symmetric/ Euclidean,
respectively. 
\end{definition}

In this section,
the many-valued modal logics, we shall deal with,
result from \mvk{} by adding
some subsets of the following axioms.
\begin{equation}
\label{eq: mvml total}
(\Box\varphi,n) \rightarrow (\Diamond\varphi,n)
\end{equation}
\vspace{-0.3 em}
\begin{equation}
\label{eq: mvml reflexive 1}
(\Box\varphi,k) \rightarrow (\varphi,k)^+
\end{equation}
\vspace{-0.3 em}
\begin{equation}
\label{eq: mvml reflexive 2}
(\varphi,k) \rightarrow (\Diamond\varphi,k)^+
\end{equation}
\vspace{-0.3 em}
\begin{equation}
\label{eq: mvk transitive 1}
(\Box\varphi,k)\rightarrow (\Box\Box\varphi,k)^+
\end{equation}
\vspace{-0.3 em}
\begin{equation}
\label{eq: mvk transitive 2}
(\Diamond\Diamond\varphi,k)\rightarrow (\Diamond\varphi,k)^+
\end{equation}
\vspace{-0.3 em}
\begin{equation}
\label{eq: mvk symmetric 1}
(\varphi,k)\rightarrow (\Box\Diamond\varphi,k)^+
\end{equation}
\vspace{-0.3 em}
\begin{equation}
\label{eq: mvk symmetric 2}
(\Diamond\Box\varphi,k)\rightarrow (\varphi,k)^+
\end{equation}
\vspace{-0.3 em}
\begin{equation}
\label{eq: mvk euclidean 1}
(\Diamond\varphi,k)\rightarrow (\Box\Diamond\varphi,k)^+
\end{equation}
\vspace{-0.3 em}
\begin{equation}
\label{eq: mvk euclidean 2}
(\Diamond\Box\varphi,k)\rightarrow (\Box\varphi,k)^+
\end{equation}

\begin{theorem}
\label{t: mvk extensions completeness}
Let $L$ be an extension of {\em \mvk{}} and
let $\bsigma$ be an $L$-consistent set of sequents.
\begin{enumerate}
\item[$(i)$]
If \eqref{eq: mvml total} is an axiom of $L$,
then $M_{L,\sbsigma}$ is serial.
\item[$(ii)$]
If \eqref{eq: mvml reflexive 1} and \eqref{eq: mvml reflexive 2}
are axioms of $L$,
then $M_{L,\sbsigma}$ is reflexive.
\item[$(iii)$]
If \eqref{eq: mvk transitive 1} and \eqref{eq: mvk transitive 2}
are axioms of $L$,
then $M_{L,\sbsigma}$ is transitive.
\item[$(iv)$]
If \eqref{eq: mvk symmetric 1} and \eqref{eq: mvk symmetric 2}
are axioms of $L$,
then $M_{L,\sbsigma}$ is symmetric.
\item[$(v)$]
If \eqref{eq: mvk euclidean 1} and \eqref{eq: mvk euclidean 2}
are axioms of $L$,
then $M_{L,\sbsigma}$ is Euclidean.
\end{enumerate}
\end{theorem}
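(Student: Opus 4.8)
The plan is to establish each of the five correspondences by a uniform argument: take two worlds (maximal $\bL$-$\bsigma$-consistent sets) related in the canonical model, assume the relational property fails, and derive from the relevant axiom(s) together with the definition of $R_{\sbl,\sbsigma}$ that one of the worlds is $\bL$-$\bsigma$-inconsistent, a contradiction. Recall that by the footnote to the definition of $R_{\sbsigma}$, $\Gamma'R_{\sbl,\sbsigma}\Gamma''$ iff $(\Gamma')^\times\cap\Gamma''=\emptyset$; equivalently, writing $(\Box\psi,i_\psi),(\Diamond\psi,j_\psi)\in\Gamma'$ and $(\psi,k_\psi)\in\Gamma''$ for the unique labels provided by Lemma~\ref{l: maximal set property l}, the relation says $i_\psi\le k_\psi\le j_\psi$ for every formula $\psi$. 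So all five parts reduce to manipulating these inequalities, using that each world is closed under derivability (if $\bsigma\vdash_{\sbl}\Gamma'\to(\varphi,k)$ with $\Gamma'\subseteq u$ then $(\varphi,k)\in u$, else resolution against the unique actual label of $\varphi$ in $u$ would make $u$ inconsistent).

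First, for $(i)$ (seriality): fix $u\in W_{\sbl,\sbsigma}$; I must produce $v$ with $uR_{\sbl,\sbsigma}v$, i.e.\ a maximal consistent $v$ with $i_\psi\le k_\psi\le j_\psi$ for all $\psi$. By Lemma~\ref{l: zorn l} it suffices that the set $\{(\psi,k_\psi):\psi\in\F, k_\psi\in[i_\psi,j_\psi]\text{ chosen}\}$ extend to a consistent set; the obstruction would be a finite inconsistency, and axiom~\eqref{eq: mvml total}, $(\Box\varphi,n)\to(\Diamond\varphi,n)$, is exactly what rules out the degenerate interval that would otherwise appear — essentially this is the canonical-model argument of Section~\ref{s: proof} for $\Box\varphi$ at label $k\ne n$, now needing that $[i_\psi,j_\psi]\ne\emptyset$ always, which Lemma~\ref{l: i leq j} already gives, plus handling the case where both a $\Box$ and a $\Diamond$ formula force label $n$ consistently; I expect to invoke the super-multi-shift rule~\eqref{eq: mvml super-multi-shift} exactly as in the proof of Theorem~\ref{t: main theorem}. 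For $(ii)$ (reflexivity): I must show $uR_{\sbl,\sbsigma}u$, i.e.\ $i_\psi\le k_\psi\le j_\psi$ where now $k_\psi$ is the \emph{same} world's label for $\psi$. From $(\Box\psi,i_\psi)\in u$ and axiom~\eqref{eq: mvml reflexive 1}, $(\Box\psi,i_\psi)\to(\psi,i_\psi)^+$, closure gives $(\psi,k)\in u$ for some $k\ge i_\psi$, hence $k_\psi\ge i_\psi$; dually axiom~\eqref{eq: mvml reflexive 2} gives $k_\psi\le j_\psi$.

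For $(iii)$--$(v)$ the pattern is the same and only the axioms change. For transitivity, assume $uR_{\sbl,\sbsigma}v$ and $vR_{\sbl,\sbsigma}w$; to get $uR_{\sbl,\sbsigma}w$ I need, for each $\psi$, that the label of $\psi$ in $w$ lies in $[i_\psi,j_\psi]$ where $(\Box\psi,i_\psi),(\Diamond\psi,j_\psi)\in u$. From $(\Box\psi,i_\psi)\in u$ and axiom~\eqref{eq: mvk transitive 1}, $(\Box\psi,i_\psi)\to(\Box\Box\psi,i_\psi)^+$, closure puts $(\Box\Box\psi,i')\in u$ with $i'\ge i_\psi$; then $uR_{\sbl,\sbsigma}v$ gives $(\Box\psi,i'')\in v$ with $i''\ge i'$, and $vR_{\sbl,\sbsigma}w$ gives the label of $\psi$ in $w$ is $\ge i''\ge i_\psi$; the $\Diamond$ side uses axiom~\eqref{eq: mvk transitive 2} dually. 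Symmetry (parts $(iv)$) and the Euclidean property (part $(v)$) go through verbatim with the respective axiom pairs~\eqref{eq: mvk symmetric 1}--\eqref{eq: mvk symmetric 2} and~\eqref{eq: mvk euclidean 1}--\eqref{eq: mvk euclidean 2}: in each case one composes one step of a given canonical relation with the nesting $\Box\Diamond$ or $\Diamond\Box$ supplied by the axiom to land the required label in the required interval. The dual ($\Diamond$) halves are routine once the $\Box$ halves are done.

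The main obstacle is part $(i)$: unlike $(ii)$--$(v)$, which are pure closure-under-derivability computations on the finite data $(i_\psi,j_\psi,k_\psi)$, seriality requires \emph{constructing} a new world, so it genuinely re-runs the König-lemma/super-multi-shift machinery from the proof of Theorem~\ref{t: main theorem} and must also check that axiom~\eqref{eq: mvml total} suffices to prevent the one bad configuration — namely a $\psi$ with $(\Box\psi,n)\in u$ but forced $(\Diamond\psi,k)\in u$ for $k<n$ — which is precisely the dead-end witness and is killed by~\eqref{eq: mvml total} via a single resolution. Everything else is bookkeeping with the inequalities $i_\psi\le k_\psi\le j_\psi$.
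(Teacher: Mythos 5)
The paper omits the proof of this theorem, but your argument is correct and is the standard canonical-model correspondence argument the omitted proof would have to be: parts $(ii)$--$(v)$ are exactly the closure-under-derivability computations on the labels $i_\psi\le k_\psi\le j_\psi$ that you describe, and part $(i)$ re-runs the successor construction from the proof of Theorem~\ref{t: main theorem}. The one detail worth spelling out in $(i)$ is that after applying~\eqref{eq: mvml super-multi-shift} you still need an anchor labelled formula $(\chi,k)$ with $k\ne n$ (to close with rule~\eqref{eq: mvk 1}) or with $k\ne 1$ (to close with rule~\eqref{eq: mvk 2}); axiom~\eqref{eq: mvml total} supplies it, since $(\Box\chi,n)\in u$ forces $(\Diamond\chi,n)\in u$ and $n\ne 1$, while if $(\Box\chi,i_\chi)\in u$ with $i_\chi\ne n$ the anchor is immediate.
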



Next we define the the many-valued counterparts of
the two-valued modal logics D,\,T,\,K4,\,S4,\,B, and S5.

\begin{definition}
\label{d: extensions}
~

\begin{itemize}
\item
The many-valued modal logic \mvd{} is
obtained from \mvk{} by adding to it~\eqref{eq: mvml total}.
\item
The many-valued modal logic \mvt{} is
obtained from \mvk{} by adding
to it~\eqref{eq: mvml reflexive 1} and~\eqref{eq: mvml reflexive 2}.
\item
The many-valued modal logic \mvkfour{} is
obtained from \mvk{} by adding
to it~\eqref{eq: mvk transitive 1} and~\eqref{eq: mvk transitive 2}.
\item
The many-valued modal logic \mvsfour{} is
obtained from \mvt{} by adding
to it~\eqref{eq: mvk transitive 1} and~\eqref{eq: mvk transitive 2}.
\item
The many-valued modal logic \mvb{} is
obtained from \mvk{} by adding
to it~\eqref{eq: mvk symmetric 1} and~\eqref{eq: mvk symmetric 2}.
\item
The many-valued modal logic \mvsfive{} is
obtained from \mvt{} by adding
to it~\eqref{eq: mvk euclidean 1} and~\eqref{eq: mvk euclidean 2}.
\end{itemize}
\end{definition}

The above many-valued logics, but \mvd{} are defined by pairs of axioms~-
the many valued counterpart of the two-valued one and its dual,
because the logics under consideration
do not necessarily have negation.
Thus, unlike in the two-valued case,
$\Box$ and $\Diamond$ are not interdefinable.
We address the extension of these logics with negation in
Section~\ref{s: duality}.

Note that the above axioms are many-valued counterparts of axioms~$\bD$,
see~\cite[p.~29]{HughesC84},~$\bT$,\,$\bfour$,\,$\bB$,
see~\cite[p.~10]{HughesC84}, and~$\bE$, see~\cite[p.~11]{HughesC84}.

\begin{theorem}
\label{t: extensions}
~

\begin{itemize}
\item[$(i)$]
{\em \mvd{}} is sound and (strongly) complete
with respect to serial Kripke models.
\item[$(ii)$]
{\em \mvt{}} is sound and (strongly) complete
with respect to reflexive Kripke models.
\item[$(iii)$]
{\em \mvkfour{}} is sound and (strongly) complete
with respect to transitive Kripke models.
\item[$(iv)$]
{\em \mvsfour{}} is sound and (strongly) complete
with respect to reflexive and transitive (preordered) Kripke models.
\item[$(v)$]
{\em \mvb{}} is sound and (strongly) complete with
respect to symmetric Kripke models.
\item[$(vi)$]
{\em \mvsfive{}} is sound and (strongly) complete
with respect to reflexive and Euclidean Kripke models.\footnote{
This is the class of all Kripke models
whose accessibility relation is an equivalence relation.}
\end{itemize}
\end{theorem}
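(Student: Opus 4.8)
The plan is to prove Theorem~\ref{t: extensions} by combining the three ingredients already assembled in the paper: soundness (the ``only if'' direction of Theorem~\ref{t: main}), the general canonical model machinery for extensions (Corollary~\ref{c: extensions}), and the structural facts about canonical models established in Theorem~\ref{t: mvk extensions completeness}. For each of the six logics $\bL \in \{\mvd,\mvt,\mvkfour,\mvsfour,\mvb,\mvsfive\}$, soundness amounts to checking that each of the new axiom schemes listed in~\eqref{eq: mvml total}--\eqref{eq: mvk euclidean 2} is valid in every Kripke model of the corresponding class; this is a routine semantic verification using the definitions of $I(u,\Box\varphi)$ and $I(u,\Diamond\varphi)$ as infimum and supremum over successors, together with the stated property (seriality, reflexivity, transitivity, symmetry, or Euclideanness) of $R$. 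Combined with the already-proved soundness of \mvk{}, an induction on derivation length then yields soundness of each extension.

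For completeness I would argue as follows. Fix $\bL$ and suppose $\bsigma \not\vdash_{\sbl} \Gamma \rightarrow \Delta$. If $\bsigma$ is $\bL$-inconsistent there is nothing to prove, so assume it is $\bL$-consistent and form the $\bL$-$\bsigma$-canonical model $M_{\sbl,\sbsigma}$. By Corollary~\ref{c: extensions}$(ii)$, $M_{\sbl,\sbsigma} \models \bsigma$, and by Corollary~\ref{c: extensions}$(iii)$, $M_{\sbl,\sbsigma} \not\models \Gamma \rightarrow \Delta$. It remains only to observe that $M_{\sbl,\sbsigma}$ lies in the relevant class of Kripke models: for \mvd{} this is Theorem~\ref{t: mvk extensions completeness}$(i)$; for \mvt{}, $(ii)$; for \mvkfour{}, $(iii)$; for \mvb{}, $(iv)$; for \mvsfour{}, the conjunction of $(ii)$ and $(iii)$, giving a preorder; and for \mvsfive{}, the conjunction of $(ii)$ and $(v)$, giving a reflexive Euclidean — hence equivalence — relation (as noted in the footnote, reflexive plus Euclidean is exactly equivalence). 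Thus $\bsigma \not\models_{\mathcal{C}} \Gamma \rightarrow \Delta$ where $\mathcal{C}$ is the class in question, which is the contrapositive of completeness with respect to $\mathcal{C}$.

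The only genuine work is therefore the soundness check and, more substantially, the proof of Theorem~\ref{t: mvk extensions completeness} itself — but that is a separately stated result I may assume. Within the present proof, the one point that needs a little care is the bookkeeping for \mvsfour{} and \mvsfive{}, where two of the clauses of Theorem~\ref{t: mvk extensions completeness} must be invoked simultaneously and one must note that having the defining axioms of \mvt{} plus those of \mvkfour{} (respectively of \mvsfive{}'s Euclidean pair) present in $\bL$ indeed triggers both the relevant clauses, so that the canonical relation is simultaneously reflexive and transitive (respectively reflexive and Euclidean). I expect this to be the main ``obstacle,'' though it is really just the remark that the hypotheses of several clauses of Theorem~\ref{t: mvk extensions completeness} are met at once. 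Everything else is an immediate assembly of the cited results, so the proof will be short.

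\begin{proof}
\textbf{of Theorem~\ref{t: extensions}.}
Soundness of each of the six logics follows from the soundness of \mvk{}
(the ``only if'' part of Theorem~\ref{t: main}) by an easy induction on
derivation length, once we check that each additional axiom scheme is
valid in every Kripke model of the relevant class.
For~\eqref{eq: mvml total}, if $R$ is serial and $I(u,\Box\varphi) = \bv_n$,
then $S(u) \neq \emptyset$ and $I(v,\varphi) = \bv_n$ for each $v \in S(u)$,
whence $I(u,\Diamond\varphi) = \bv_n$.
For~\eqref{eq: mvml reflexive 1} and~\eqref{eq: mvml reflexive 2}, if $R$ is
reflexive then $u \in S(u)$, so $I(u,\Box\varphi) \leq I(u,\varphi) \leq
I(u,\Diamond\varphi)$.
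For~\eqref{eq: mvk transitive 1} and~\eqref{eq: mvk transitive 2}, if $R$ is
transitive and $v \in S(u)$ then $S(v) \subseteq S(u)$, so
$I(u,\Box\varphi) \leq \inf(\{I(v,\varphi) : v \in S(v)\}) = I(v,\Box\varphi)$
for all $v \in S(u)$, giving $I(u,\Box\varphi) \leq I(u,\Box\Box\varphi)$, and
dually for $\Diamond$.
For~\eqref{eq: mvk symmetric 1} and~\eqref{eq: mvk symmetric 2}, if $R$ is
symmetric and $v \in S(u)$ then $u \in S(v)$, so $I(v,\Diamond\varphi) \geq
I(u,\varphi)$ for all $v \in S(u)$, whence $I(u,\Box\Diamond\varphi) \geq
I(u,\varphi)$, and dually.
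For~\eqref{eq: mvk euclidean 1} and~\eqref{eq: mvk euclidean 2}, if $R$ is
Euclidean and $v,w \in S(u)$ then $vRw$, so $S(u) \subseteq S(v)$ and hence
$I(v,\Diamond\varphi) \geq I(u,\Diamond\varphi)$ for all $v \in S(u)$, giving
$I(u,\Box\Diamond\varphi) \geq I(u,\Diamond\varphi)$, and dually.
By Definition~\ref{d: extensions}, each of the six logics has exactly the
axioms just verified for its class, so soundness follows in each case.

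For completeness, fix one of the logics, call it $\bL$, and let $\mathcal{C}$
be the corresponding class of Kripke models.
Suppose $\bsigma \not\models_{\mathcal{C}} \Gamma \rightarrow \Delta$ is to be
refuted; equivalently, assume $\bsigma \not\vdash_{\sbl} \Gamma \rightarrow
\Delta$, and we must produce a model in $\mathcal{C}$ satisfying $\bsigma$ but
not $\Gamma \rightarrow \Delta$.
Then $\bsigma$ is $\bL$-consistent, since a $\bL$-consistent set witnessing
$\bsigma \not\vdash_{\sbl} \Gamma \rightarrow \Delta$ exists by
Lemma~\ref{l: zorn l}; so the $\bL$-$\bsigma$-canonical model
$M_{\sbl,\sbsigma}$ is defined.
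By Corollary~\ref{c: extensions}$(ii)$, $M_{\sbl,\sbsigma} \models \bsigma$,
and by Corollary~\ref{c: extensions}$(iii)$, $M_{\sbl,\sbsigma} \not\models
\Gamma \rightarrow \Delta$.
It remains to see that $M_{\sbl,\sbsigma} \in \mathcal{C}$.
For \mvd{}, the axiom~\eqref{eq: mvml total} is in $\bL$, so by
Theorem~\ref{t: mvk extensions completeness}$(i)$ the relation
$R_{\sbl,\sbsigma}$ is serial.
For \mvt{}, axioms~\eqref{eq: mvml reflexive 1} and
\eqref{eq: mvml reflexive 2} are in $\bL$, so by
Theorem~\ref{t: mvk extensions completeness}$(ii)$ it is reflexive.
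For \mvkfour{}, axioms~\eqref{eq: mvk transitive 1} and
\eqref{eq: mvk transitive 2} are in $\bL$, so by
Theorem~\ref{t: mvk extensions completeness}$(iii)$ it is transitive.
For \mvsfour{}, all four of
\eqref{eq: mvml reflexive 1},\eqref{eq: mvml reflexive 2},
\eqref{eq: mvk transitive 1},\eqref{eq: mvk transitive 2} are in $\bL$, so by
Theorem~\ref{t: mvk extensions completeness}$(ii)$ and $(iii)$ it is both
reflexive and transitive, i.e. a preorder.
For \mvb{}, axioms~\eqref{eq: mvk symmetric 1} and
\eqref{eq: mvk symmetric 2} are in $\bL$, so by
Theorem~\ref{t: mvk extensions completeness}$(iv)$ it is symmetric.
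For \mvsfive{}, all of
\eqref{eq: mvml reflexive 1},\eqref{eq: mvml reflexive 2},
\eqref{eq: mvk euclidean 1},\eqref{eq: mvk euclidean 2} are in $\bL$, so by
Theorem~\ref{t: mvk extensions completeness}$(ii)$ and $(v)$ it is reflexive
and Euclidean; a reflexive Euclidean relation is symmetric and transitive, so
it is an equivalence relation.
In every case $M_{\sbl,\sbsigma} \in \mathcal{C}$, which contradicts
$\bsigma \models_{\mathcal{C}} \Gamma \rightarrow \Delta$ and completes the
proof of completeness.
\end{proof}
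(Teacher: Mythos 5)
Your proposal is correct and follows exactly the route the paper sets up: soundness by semantic verification of each added axiom scheme in the corresponding frame class, and completeness by combining Corollary~\ref{c: extensions}$(ii)$--$(iii)$ with the frame properties of the canonical model given by Theorem~\ref{t: mvk extensions completeness} (invoking two clauses at once for \mvsfour{} and \mvsfive{}). The paper omits the proof for space, but its surrounding infrastructure is clearly designed for precisely this argument, so there is nothing to fault.
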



\section{Decidability of \mvk{} and its extensions}
\label{s: fmp}

In what follows, $\bL$ can be any of the
logics \mvk{},\mvd{},\mvt{},\mvkfour{},\mvsfour{},\mvb{} or \mvsfive{} and
$\bC_{\hspace{-0.1 em} \sbl}$ is the class of
the respective Kripke models, see Theorem \ref{t: extensions}.

We show that $\bL$ possesses the finite model property.
The proof is based on
the filtration technique, cf.~\cite[Chapter I, Section 7]{Segerberg71},
where this technique is applied to some two-valued modal logics.

Let $\Phi$ be a subformula-closed set of formulas\footnote{
That is,
if $\varphi \in \Phi$,
then each subformula of $\varphi$ also belongs to $\Phi$.}
and 
let ${M=\langle W,R,I \rangle}$
be a Kripke model.
The equivalence relation $\equiv_\Phi$ on $W$ is defined as follows.
\begin{center}
$u \equiv_\Phi v$
if and only if
$I(u,\varphi) = I(v,\varphi)$ for all $\varphi \in \Phi$. 
\end{center}

The \emph{$\bL$-filtration of $M$ through $\Phi$}
is the Kripke model
$M^\star_{\sbl,\Phi}
=
\langle W_{\sbl,\Phi}^\star,
R_{\sbl,\Phi}^\star,
I_{\sbl,\Phi}^\star \rangle$,
where

\begin{itemize}
\item
$W_{\sbl,\Phi}^\star$ is the set of
all equivalence classes of $\equiv_\Phi$.
That is, ${W_{\sbl,\Phi}^\star = \{ [u] : u \in W \}}$
where $[u]$ is the $\equiv_\Phi$ equivalence class of $u$.
\item
For $[u] \in W^\star$ and a propositional variable $p \in \Phi$,
$I_{\sbl,\Phi}^\star([u],p) = I(u,p)$.
By the definition of $\equiv_\Phi$, $I_{\sbl,\Phi}^\star$ is well defined
and
the value of $I_{\sbl,\Phi}^\star$
for $p \notin \Phi$ does not matter for our purposes. 
\item
The definition of $R_{\sbl,\Phi}^\star$ depends on $\bL$.
\begin{itemize}
\item
For \mvk{},\mvd{} and \mvt{}, $[u] R_{\sbl,\Phi}^\star [v]$
if and only if
there exist $u^\prime \in [u]$ and $v^\prime \in [v]$ such that
$u^\prime R v^\prime$.
\item
For \mvkfour{}, $[u]R_{\sbl,\Phi}^\star[v]$ if and only if
\begin{itemize}
\item
for all $\Box\varphi^\prime \in \Phi$,
$I(u,\Box\varphi^\prime) \leq I(v,\Box\varphi^\prime)$
and
$I(u,\Box\varphi^\prime) \leq I(v,\varphi^\prime)$; and
\item
for all $\Diamond\varphi^\prime \in \Phi$,
$I(u,\Diamond\varphi^\prime) \geq I(v,\Diamond\varphi^\prime)$
and
$I(u,\Diamond\varphi^\prime) \geq I(v,\varphi^\prime)$.
\end{itemize} 
\item
For \mvsfour{}, $[u]R_{\sbl,\Phi}^\star[v]$ if and only if
\begin{itemize}
\item
for all $\Box\varphi^\prime \in \Phi$,
$I(u,\Box\varphi^\prime) \leq I(v,\Box\varphi^\prime)$; and
\item
for all $\Diamond\varphi^\prime\in\Phi$,
$I(u,\Diamond\varphi^\prime) \geq I(v,\Diamond\varphi^\prime)$.
\end{itemize} 
\item 
For \mvb{}, $[u]R_{\sbl,\Phi}^\star[v]$ if and only if
\begin{itemize}
\item
for all $\Box\varphi^\prime \in \Phi$,
$I(u,\Box\varphi^\prime) \leq I(v,\varphi^\prime)$
and
$I(v,\Box\varphi^\prime) \leq I(u,\varphi^\prime)$; and
\item
for all $\Diamond\varphi^\prime \in \Phi$,
$I(u,\Diamond\varphi^\prime) \geq I(v,\varphi^\prime)$
and
$I(v,\Diamond\varphi^\prime) \geq I(u,\varphi^\prime)$.
\end{itemize} 
\item
For \mvsfive{}, $[u]R_{\sbl,\Phi}^\star[v]$ if and only if
\begin{itemize}
\item
for all $\Box\varphi^\prime \in \Phi$,
$I(u,\Box\varphi^\prime)= I(v,\Box\varphi^\prime)$ and
\item
for all $\Diamond\varphi^\prime \in \Phi$,
$I(u,\Diamond\varphi^\prime) = I(v,\Diamond\varphi^\prime)$.
\end{itemize} 
\end{itemize}
\end{itemize}

\begin{theorem}
\label{t: filtration} 
Let $M$ be in $\bC_{\hspace{-0.1 em} \sbl}$ and
let $M_{\sbl,\Phi}^\star$ be its $\bL$-filtration through $\Phi$.
Then
\begin{itemize}
\item
For all $\varphi \in \Phi$ and $u \in W$,
$I(u,\varphi) = I_{\sbl,\Phi}^\star([u],\varphi)$ and
\item
$M_{\sbl,\Phi}^\star$ is in $\bC_{\hspace{-0.1 em} \sbl}$.
\end{itemize}
\end{theorem}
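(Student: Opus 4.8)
The plan is to prove the two bullet points of Theorem~\ref{t: filtration} separately, the first by induction on the complexity of $\varphi$ (the \emph{filtration lemma}) and the second by checking, case by case on $\bL$, that the relation $R_{\sbl,\Phi}^\star$ has the required frame property.

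For the first bullet, I would argue by structural induction on $\varphi \in \Phi$. The atomic case holds by definition of $I_{\sbl,\Phi}^\star$, and the case of a propositional connective $\ast$ is immediate from the recursive clause for $I$ together with the induction hypothesis, since $\Phi$ is subformula-closed. The interesting cases are $\varphi = \Box\varphi'$ and $\varphi = \Diamond\varphi'$; by duality it suffices to treat $\Box$. I must show $I(u,\Box\varphi') = I_{\sbl,\Phi}^\star([u],\Box\varphi')$, i.e. $\inf\{I(v,\varphi') : uRv\} = \inf\{I_{\sbl,\Phi}^\star([v'],\varphi') : [u]R_{\sbl,\Phi}^\star[v']\}$, which by the induction hypothesis on $\varphi'$ equals $\inf\{I(v',\varphi') : [u]R_{\sbl,\Phi}^\star[v']\}$. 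The inequality ``$\geq$'' (the filtrated infimum is at least the original one) requires that whenever $uRv$ in $M$ one has $[u]R_{\sbl,\Phi}^\star[v]$ — this is the ``first filtration condition'' and holds for every one of the listed definitions of $R_{\sbl,\Phi}^\star$ (for \mvk{},\mvd{},\mvt{} it is immediate; for the others it follows from soundness of the relevant axioms, e.g. \eqref{eq: mvk transitive 1} and \eqref{eq: mvk transitive 2} give $I(u,\Box\varphi')\le I(v,\Box\varphi')$ and $I(u,\Box\varphi')\le I(v,\varphi')$ when $uRv$, using that $M\in\bC_{\hspace{-0.1 em}\sbl}$ is transitive). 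The inequality ``$\leq$'' (the ``second filtration condition'') requires that whenever $[u]R_{\sbl,\Phi}^\star[v']$ and $\Box\varphi'\in\Phi$ one has $I(u,\Box\varphi')\le I(v',\varphi')$; this is built directly into each of the non-basic definitions, and for \mvk{},\mvd{},\mvt{} it follows because $[u]R_{\sbl,\Phi}^\star[v']$ gives $u''Rv''$ for some $u''\equiv_\Phi u$, $v''\equiv_\Phi v'$, whence $I(u,\Box\varphi')=I(u'',\Box\varphi')\le I(v'',\varphi')=I(v',\varphi')$ using $\Box\varphi',\varphi'\in\Phi$. Combining the two inequalities closes the induction.

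For the second bullet I would verify, for each of the seven logics, that $R_{\sbl,\Phi}^\star$ lies in the correct class $\bC_{\hspace{-0.1 em}\sbl}$. For \mvk{} there is nothing to check. For \mvd{}: seriality of $R$ gives, for each $u$, some $v$ with $uRv$, hence $[u]R_{\sbl,\Phi}^\star[v]$, so $R_{\sbl,\Phi}^\star$ is serial. For \mvt{}: reflexivity of $R$ gives $uRu$, hence $[u]R_{\sbl,\Phi}^\star[u]$. For the remaining logics one checks reflexivity, transitivity, symmetry or the Euclidean property directly from the arithmetical definitions: e.g. for \mvsfour{}, reflexivity is trivial and transitivity is just transitivity of $\leq$ and $\geq$ applied coordinatewise over $\Box\varphi'\in\Phi$ and $\Diamond\varphi'\in\Phi$; for \mvsfive{} the defining condition is literally an equality, so $R_{\sbl,\Phi}^\star$ is an equivalence relation, which is the class for \mvsfive{}; for \mvkfour{} one checks transitivity by chaining the four defining inequalities; for \mvb{} one checks symmetry, which is manifest since the defining condition is symmetric in $u$ and $v$. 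In each case one must also confirm that reflexivity is needed where claimed (\mvsfour{} and \mvsfive{} are defined over reflexive models) and that it indeed holds for $R_{\sbl,\Phi}^\star$, which it does since each relevant defining condition is satisfied by taking $v=u$.

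The main obstacle is the $\Box$/$\Diamond$ step of the filtration lemma, specifically getting both the ``first'' and ``second'' filtration conditions to hold simultaneously for the more elaborate definitions of $R_{\sbl,\Phi}^\star$ used for \mvkfour{}, \mvb{} and the like: the definition must be loose enough that $uRv$ always implies $[u]R_{\sbl,\Phi}^\star[v]$ (so no $\Box$-value goes up and no $\Diamond$-value goes down along $R_{\sbl,\Phi}^\star$ any more than along $R$), yet tight enough that $[u]R_{\sbl,\Phi}^\star[v']$ still forces $I(u,\Box\varphi')\le I(v',\varphi')$ and $I(u,\Diamond\varphi')\ge I(v',\varphi')$ for all relevant subformulas. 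The case definitions given in the statement are engineered precisely so that both hold — the first using soundness of the extension's axioms on $M\in\bC_{\hspace{-0.1 em}\sbl}$, the second by fiat — so the work is a careful but routine unpacking; I would present the \mvkfour{} case in full and indicate the others as analogous or dual.
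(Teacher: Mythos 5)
Your proposal is correct and follows exactly the route the paper intends (the proof is omitted there, but the text explicitly says it is the standard filtration argument): an induction establishing the two filtration conditions --- $uRv \Rightarrow [u]R_{\sbl,\Phi}^\star[v]$, and $[u]R_{\sbl,\Phi}^\star[v] \Rightarrow I(u,\Box\varphi')\le I(v,\varphi')$ and $I(u,\Diamond\varphi')\ge I(v,\varphi')$ --- followed by a case-by-case frame check. One small inaccuracy: the second condition is \emph{not} ``built directly into'' the definitions for \mvsfour{} and \mvsfive{}, whose clauses only compare $I(u,\Box\varphi')$ with $I(v,\Box\varphi')$; there you must additionally invoke reflexivity of $M$ (giving $I(v,\Box\varphi')\le I(v,\varphi')$ and $I(v,\Diamond\varphi')\ge I(v,\varphi')$) to reach $I(u,\Box\varphi')\le I(v,\varphi')$. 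With that one-line supplement (and the routine remark that the non-basic definitions of $R_{\sbl,\Phi}^\star$ are independent of the choice of representatives because all formulas mentioned lie in the subformula-closed set $\Phi$), the argument goes through as you describe.
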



\begin{definition}
\label{d: fmp}
A logic $\bL$ possesses the \emph{finite model property},
if for each finite set of sequents $\bsigma$ and
each sequent $\Gamma \rightarrow \Delta$ such that
$\bsigma \not\vdash_{\sbl} \Gamma \rightarrow \Delta$,
there exists a finite Kripke model
$M \in \bC_{\hspace{-0.1 em} \sbl}$
(i.e. the set of worlds of $M$ is finite) such that
$M \models \bsigma$, but $M \not\models \Gamma \rightarrow \Delta$.
\end{definition}

\begin{theorem}
\label{t: fmp}
Each of the logics considered above
possesses the finite model property.
\end{theorem}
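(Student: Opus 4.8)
The plan is to derive the finite model property of each logic $\bL \in \{ \mvk{},\mvd{},\mvt{},\mvkfour{},\mvsfour{},\mvb{},\mvsfive{} \}$ directly from the filtration theorem (Theorem~\ref{t: filtration}) together with the canonical-model machinery of Corollary~\ref{c: extensions}. First I would fix a finite set of sequents $\bsigma$ and a sequent $\Gamma \rightarrow \Delta$ with $\bsigma \not\vdash_{\sbl} \Gamma \rightarrow \Delta$. Since $\bsigma$ and $\Gamma \rightarrow \Delta$ are finite, the set
\[
\Phi = \{ \varphi : (\varphi,k) \text{ occurs in } \bsigma \text{ or in } \Gamma \rightarrow \Delta \text{ for some } k \} \text{ closed under subformulas}
\]
is a \emph{finite} subformula-closed set of formulas; note in particular that it contains only finitely many formulas of the forms $\Box\psi$ and $\Diamond\psi$, so all the filtration clauses that quantify over ``$\Box\varphi' \in \Phi$'' and ``$\Diamond\varphi' \in \Phi$'' are finitary conditions.

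Next I would invoke part $(iii)$ of Corollary~\ref{c: extensions}: from $\bsigma \not\vdash_{\sbl} \Gamma \rightarrow \Delta$ we obtain the $\bL$-$\bsigma$-canonical model $M_{\sbl,\sbsigma}$, which by part $(ii)$ satisfies $M_{\sbl,\sbsigma} \models \bsigma$ and by part $(iii)$ has $M_{\sbl,\sbsigma} \not\models \Gamma \rightarrow \Delta$. By Theorem~\ref{t: mvk extensions completeness} (for the relevant extensions) $M_{\sbl,\sbsigma}$ lies in $\bC_{\hspace{-0.1 em} \sbl}$, i.e.\ its accessibility relation has the frame property characterizing $\bL$. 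Now form the $\bL$-filtration $M^\star = (M_{\sbl,\sbsigma})^\star_{\sbl,\Phi}$ through the finite set $\Phi$. Since $W_{\sbl,\Phi}^\star$ is the set of $\equiv_\Phi$-equivalence classes and $\Phi$ is finite while $\bV$ is finite, there are at most $|\bV|^{|\Phi|}$ such classes, so $M^\star$ is a \emph{finite} Kripke model. By Theorem~\ref{t: filtration}, $M^\star \in \bC_{\hspace{-0.1 em} \sbl}$ and $I^\star_{\sbl,\Phi}([u],\varphi) = I_{\sbl,\sbsigma}(u,\varphi)$ for all $\varphi \in \Phi$ and all worlds $u$.

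It then remains to transfer satisfaction of sequents. I would argue: for any sequent $\Pi \rightarrow \Sigma$ all of whose labelled formulas have first components in $\Phi$, and any world $u$ of $M_{\sbl,\sbsigma}$, the truth-value preservation clause of Theorem~\ref{t: filtration} gives $M_{\sbl,\sbsigma}, u \models \Pi \rightarrow \Sigma$ iff $M^\star, [u] \models \Pi \rightarrow \Sigma$ (both sides just test the equalities $I(\cdot,\varphi) = \bv_k$ for labelled formulas $(\varphi,k)$ in $\Pi \cup \Sigma$, and these values are unchanged); quantifying over all worlds, $M_{\sbl,\sbsigma} \models \Pi \rightarrow \Sigma$ iff $M^\star \models \Pi \rightarrow \Sigma$. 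Applying this with $\Pi \rightarrow \Sigma$ ranging over the sequents of $\bsigma$ (each inside $\Phi$ by construction) yields $M^\star \models \bsigma$; applying it with $\Pi \rightarrow \Sigma$ equal to $\Gamma \rightarrow \Delta$ yields $M^\star \not\models \Gamma \rightarrow \Delta$. Hence $M^\star$ is the required finite countermodel, and since $\bL$ and $\bsigma$, $\Gamma \rightarrow \Delta$ were arbitrary, every logic in the list has the finite model property. The only slightly delicate point — the ``main obstacle'' — is checking that the sequent-satisfaction transfer is genuinely immediate from the pointwise truth-value equality in Theorem~\ref{t: filtration}; this is really just unwinding the definition of $M,u \models \Gamma \rightarrow \Delta$ (footnote~\ref{f: meta}), but one must be careful that \emph{every} labelled formula appearing in $\bsigma$ and in $\Gamma \rightarrow \Delta$ has its formula-component in $\Phi$, which is exactly why $\Phi$ was defined as the subformula-closure of all formulas occurring in the data.
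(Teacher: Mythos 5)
Your proof is correct and follows exactly the route the paper intends: take the canonical countermodel supplied by Corollary~\ref{c: extensions} and Theorem~\ref{t: mvk extensions completeness}, filter it through the finite subformula closure $\Phi$ of the formulas occurring in $\bsigma$ and $\Gamma\rightarrow\Delta$, and use Theorem~\ref{t: filtration} to transfer truth values (hence sequent satisfaction) to the resulting model with at most $|\bV|^{|\Phi|}$ worlds. The one implicit step worth making explicit is that $\bsigma\not\vdash_{\sbl}\Gamma\rightarrow\Delta$ entails $\bL$-consistency of $\bsigma$ (otherwise weakening would derive any sequent), so the canonical model is indeed available; everything else, including the sequent-satisfaction transfer via surjectivity of $u\mapsto[u]$, is sound.
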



\begin{corollary}
\label{c: decidability}
Each of the logics considered above is strongly decidable.
\end{corollary}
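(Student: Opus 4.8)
The plan is to derive strong decidability from the combination of strong completeness (Theorems~\ref{t: main} and~\ref{t: extensions}) and the finite model property (Theorem~\ref{t: fmp}) via a standard effective-search argument. Fix one of the logics $\bL$ under consideration, a finite set of sequents $\bsigma$, and a sequent $\Gamma \rightarrow \Delta$; the task is to decide whether $\bsigma \vdash_{\sbl} \Gamma \rightarrow \Delta$.

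First I would run two semi-decision procedures in parallel. On the one hand, since the proof system of $\bL$ is finitely presented (finitely many axiom schemes and rules, each effectively recognizable, noting that there are only finitely many table axioms~\eqref{eq: table axiom} because $\bV$ is finite and only finitely many instances relevant to a given derivation), the set of sequents derivable from $\bsigma$ is recursively enumerable; so one procedure enumerates all $\bL$-derivations from $\bsigma$ and halts with ``yes'' if it ever produces $\Gamma \rightarrow \Delta$. On the other hand, I would enumerate all finite Kripke models in $\bC_{\hspace{-0.1 em}\sbl}$ whose valuation is restricted to the (finitely many) propositional variables occurring in $\bsigma$ and in $\Gamma\rightarrow\Delta$: for each finite cardinality $m = 1,2,\ldots$ there are only finitely many such models up to isomorphism, each can be listed effectively, and for each one the relations $M \models \bsigma$ and $M \models \Gamma \rightarrow \Delta$ are decidable by direct computation of $I$ on the finitely many worlds and the finitely many subformulas involved (the recursive clauses for $\ast$, $\Box$, $\Diamond$ are all finite suprema/infima over finite sets). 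This second procedure halts with ``no'' if it finds a finite $M \in \bC_{\hspace{-0.1 em}\sbl}$ with $M \models \bsigma$ but $M \not\models \Gamma \rightarrow \Delta$.

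It remains to argue that exactly one of the two procedures halts. If $\bsigma \vdash_{\sbl} \Gamma \rightarrow \Delta$, the first procedure halts by definition; and by soundness (the ``only if'' direction of Theorem~\ref{t: main}, extended to $\bL$ by Theorem~\ref{t: extensions}) no countermodel exists in $\bC_{\hspace{-0.1 em}\sbl}$, so the second never halts. Conversely, if $\bsigma \not\vdash_{\sbl} \Gamma \rightarrow \Delta$, then by the finite model property (Theorem~\ref{t: fmp}) there is a finite $M \in \bC_{\hspace{-0.1 em}\sbl}$ with $M \models \bsigma$ and $M \not\models \Gamma \rightarrow \Delta$; since the value of $I$ on propositional variables not occurring in $\bsigma$ or $\Gamma\rightarrow\Delta$ is irrelevant to these satisfaction relations, we may assume $M$'s valuation is supported on those finitely many variables, so $M$ appears in the enumeration and the second procedure halts, while completeness guarantees the first does not. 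Running both in dovetailed fashion thus yields a decision procedure, which is what ``strongly decidable'' means (decidability of the consequence relation, i.e. with the finite side premise set $\bsigma$ allowed).

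The main obstacle is not conceptual but bookkeeping: one must make sure that ``enumerate all finite models in $\bC_{\hspace{-0.1 em}\sbl}$'' is genuinely effective for each $\bL$, i.e. that membership of a finite model in $\bC_{\hspace{-0.1 em}\sbl}$ (seriality, reflexivity, transitivity, symmetry, being Euclidean) is decidable~-- which is immediate~-- and that checking $M \models \bsigma$ and $M \not\models \Gamma\rightarrow\Delta$ is decidable, which reduces to finitely many evaluations of $I$ on subformulas over finitely many worlds. I would also note explicitly why restricting to the finitely many relevant propositional variables is legitimate: the satisfaction of a sequent at a world depends only on $I$ restricted to the variables of the formulas in that sequent, so replacing an arbitrary finite countermodel by its restriction preserves both $M \models \bsigma$ and $M \not\models \Gamma\rightarrow\Delta$. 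With these points in place the corollary follows.
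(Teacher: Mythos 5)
Your proposal is correct and is exactly the paper's argument: the paper's proof is a two-sentence sketch of the same standard parallel search (enumerate derivations from $\bsigma$ on one side, enumerate finite models in $\bC_{\hspace{-0.1 em}\sbl}$ on the other, relying on soundness, completeness, and Theorem~\ref{t: fmp} for termination). You have simply spelled out the bookkeeping~-- effective enumerability of proofs and of finite models, decidability of model checking, and the restriction to the relevant propositional variables~-- which the paper leaves implicit.
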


\begin{proof}
The decision procedure is standard.
We, in parallel, search
for a proof of $\Gamma \rightarrow \Delta$ from $\bsigma$
and for a finite Kripke model provided by Theorem~\ref{t: fmp}
that satisfies $\bsigma$,
but does not satisfy $\Gamma \rightarrow \Delta$.
\end{proof}

\section{Duality of $\Box$ and $\Diamond$ via negation}
\label{s: duality}

In \mvk{}, the existence of any specific connective is not assumed and
$\Diamond$ is not defined as
the De Morgan dual $\neg \Box \neg$ of $\Box$,
but is defined independently,
both semantically and syntactically via the proof system. 

In this section we define the truth table for negation $\neg$
in such a way that $\Box$ and $\Diamond$ become the De Morgan dual.
That is, the sequents
\begin{equation}
\label{eq: duality 1}
(\Diamond\varphi,k)\rightarrow (\neg\Box\neg\varphi,k)
\end{equation}
and
\begin{equation}
\label{eq: duality 2}
(\Box\varphi,k)\rightarrow (\neg\Diamond\neg\varphi,k)
\end{equation}
are provable in \mvk.\footnote{
In particular, in the three-valued logics of
{\L}ukasiewicz~\cite{Lukasiewicz20} and Kleene~\cite{Kleene38},
these connectives are interdefinable.}
We shall show that this is the only appropriate definition of negation,
for which~\eqref{eq: duality 1} and~\eqref{eq: duality 2}
are derivable in \mvk.

The truth table of $\neg$ is
\begin{equation}
\label{eq: neg}
\neg(\bv_k) = \bv_{n-k+1} \, \ \ \ \ k = 1,2,\ldots,n
\end{equation}
That is,
\[ 
\neg (\bv_1) = \bv_n, \ \neg (\bv_2) = \bv_{n-1}, \ \ldots, \ 
\neg (\bv_{n-1}) = \bv_2, \ \mbox{and} \ \neg (\bv_n)=\bv_1
\]
Therefore, axioms~\eqref{eq: table axiom} for $\neg$ are
\[
(\varphi,k) \rightarrow (\neg \varphi , n - k + 1)
\]

\begin{example}
\label{e: neg}
Sequents
\begin{equation}
\label{eq: neg example}
(\neg \varphi , n - k + 1) \rightarrow (\varphi,k)
\end{equation}
are \mvk{} derivable.

The derivation is as follows.
\[
\hspace{-3.1 em}
\begin{array}{lll}
1_{j \neq k}. & 
(\varphi,j) \rightarrow (\neg \varphi, n - j + 1) , \ \ j \neq k
&
\mbox{axiom~\eqref{eq: table axiom}}
\\
2_{j \neq k}. & 
(\varphi,j),(\neg \varphi,n - k + 1) \rightarrow
&
\mbox{follows from $1_j$ by
$n - j + 1,n - k + 1$-right-shift~\eqref{eq: mvml rs}}
\\
3. & 
(\neg \varphi , n - k + 1) \rightarrow (\varphi,k)
&
\mbox{follows from $2_{j \neq k}$
by multi-shift~\eqref{eq: mvml multi-shift}}
\end{array}
\]
\end{example}

\begin{remark}
\label{r: reversal}
Sequents~\eqref{eq: duality 1} and~\eqref{eq: duality 2}
immediately imply their reversals.
For~\eqref{eq: duality 1}, since each sequent in the set
\[
\{ (\Diamond\varphi,k^\prime) \rightarrow (\neg\Box\neg\varphi,k^\prime) :
k^\prime \neq k\}
\]
is derivable, by right shifts, we derive
\[
\{ (\Diamond\varphi,k^\prime),(\neg\Box\neg\varphi,k) \rightarrow :
k^\prime \neq k \}
\]
from which, by multi-shift, we obtain
\[
(\neg\Box\neg\varphi,k) \rightarrow (\Diamond\varphi,k)
\]
and,
dually, for~\eqref{eq: duality 2}. 
\end{remark}

\begin{theorem}
\label{t: negation}
Let $\neg$ be a unary connective.
Then, sequents~\eqref{eq: duality 1} and~\eqref{eq: duality 2}
are derivable in {\em \mvk} if and only if,
for all $k = 1,2,\ldots,n$, $\neg(\bv_k) = \bv_{n-k+1}$.
\end{theorem}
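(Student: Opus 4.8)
The plan is to prove the two implications separately, using the soundness and completeness of \mvk{} (Theorem~\ref{t: main}) to move freely between \mvk{}-derivability and validity in many-valued Kripke models. For the ``if'' direction, assume $\neg(\bv_k) = \bv_{n-k+1}$ for all $k$. Then $\neg : \bV \to \bV$ is a strictly order-reversing involution, and hence $\neg(\inf X) = \sup(\neg X)$ and $\neg(\sup X) = \inf(\neg X)$ for every $X \subseteq \bV$ — including $X = \emptyset$, since $\neg(\inf\emptyset) = \neg(\bv_n) = \bv_1 = \sup\emptyset$ and dually. Consequently, for every Kripke model $M = \langle W,R,I\rangle$, every $u \in W$, and every formula $\varphi$,
\[
I(u,\neg\Box\neg\varphi)
=
\neg\bigl(\inf\{\neg I(v,\varphi) : v \in S(u)\}\bigr)
=
\sup\{I(v,\varphi) : v \in S(u)\}
=
I(u,\Diamond\varphi),
\]
and symmetrically $I(u,\neg\Diamond\neg\varphi) = I(u,\Box\varphi)$. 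Thus in every model, at every world, the antecedent and the consequent of each instance of \eqref{eq: duality 1} and of \eqref{eq: duality 2} receive the same value, so these sequents are valid; by the completeness part of Theorem~\ref{t: main} (applied with $\bsigma = \emptyset$) they are \mvk{}-derivable. (A direct syntactic derivation from the $\neg$-axioms $(\varphi,k)\rightarrow(\neg\varphi,n-k+1)$, Example~\ref{e: neg}, and the modal rules is also possible, but this route is shorter.)

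For the ``only if'' direction, suppose all instances of \eqref{eq: duality 1} are \mvk{}-derivable (I will not need \eqref{eq: duality 2}). By soundness each is valid, and since $I(u,\Diamond\varphi)$ always equals some $\bv_k$, validity of the whole family forces $I(u,\neg\Box\neg\varphi) = I(u,\Diamond\varphi)$ in every model at every world. I would then apply this identity to two small models. First, take $W = \{u,v\}$, $R = \{(u,v)\}$, and a propositional variable $p$ with $I(v,p) = \bv_k$; then $I(u,\Diamond p) = \bv_k$ while $I(u,\neg\Box\neg p) = \neg\neg(\bv_k)$, so $\neg\neg(\bv_k) = \bv_k$ for all $k$, i.e. $\neg$ is an involution and hence a bijection of $\bV$. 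Second, take $W = \{u,v_1,v_2\}$, $R = \{(u,v_1),(u,v_2)\}$, and $p$ with $I(v_1,p) = \bv_a$, $I(v_2,p) = \bv_b$ for some $a < b$; then $I(u,\Diamond p) = \bv_b$ and $I(u,\neg\Box\neg p) = \neg(\min(\neg\bv_a,\neg\bv_b))$, so $\neg(\min(\neg\bv_a,\neg\bv_b)) = \bv_b$, and applying the involution yields $\min(\neg\bv_a,\neg\bv_b) = \neg\bv_b$, that is, $\neg\bv_b < \neg\bv_a$ (strict, as $\neg$ is injective). Hence $\neg$ is a strictly order-reversing bijection of the finite chain $\bv_1 < \cdots < \bv_n$, and the unique such map is $\bv_k \mapsto \bv_{n-k+1}$, as claimed.

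The argument is not deep; the only points that need care are the bookkeeping around the conventions $\inf\emptyset = \bv_n$ and $\sup\emptyset = \bv_1$ in the ``if'' direction, and, in the ``only if'' direction, the observation that the one-successor and two-successor countermodels already determine the truth table completely — so that neither a full case analysis of $\neg$ nor the second duality sequent \eqref{eq: duality 2} is needed.
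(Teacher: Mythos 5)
Your proof is correct. Since the paper omits its proof of Theorem~\ref{t: negation}, a direct comparison is not possible, but your route is certainly not the one the surrounding text gestures at: the paper's apparatus (the $\neg$-axioms $(\varphi,k)\rightarrow(\neg\varphi,n-k+1)$, Example~\ref{e: neg}, multi-shift, and rules~\eqref{eq: mvk 1}--\eqref{eq: mvk 2}) is clearly set up for a purely syntactic derivation of \eqref{eq: duality 1} and \eqref{eq: duality 2} in the ``if'' direction, whereas you detour through semantics and invoke the completeness half of Theorem~\ref{t: main} with $\bsigma=\emptyset$. That is legitimate (Theorem~\ref{t: main} is fully established by Section~\ref{s: duality}) and shorter, at the cost of being non-constructive: it produces no explicit derivation, which matters if one wants, e.g., the cut-free or effective content of the duality. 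Your key computation $\neg(\inf X)=\sup(\neg X)$ for an order-reversing involution, checked also at $X=\emptyset$ against the conventions $\inf\emptyset=\bv_n$ and $\sup\emptyset=\bv_1$, is exactly the right semantic fact, and the ``only if'' direction is handled cleanly: the one-successor model forces $\neg\neg(\bv_k)=\bv_k$ (so $\neg$ is a bijection), the two-successor model forces strict order reversal, and a strictly order-reversing bijection of a finite chain is unique. Two features of your argument are worth flagging as genuine economies relative to what the theorem statement suggests: you use only the family \eqref{eq: duality 1} in the ``only if'' direction, so the theorem remains true with either duality family alone on the left-hand side of the equivalence; and you need only two tiny countermodels rather than a case analysis of the truth table of $\neg$.
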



\begin{remark}
\label{r: redandant}
If we define negation as above,
then rule~\eqref{eq: mvk 2} becomes redundant, which can be shown
as follows.
\[
\begin{array}{lll}
1. & 
(\varphi,k) \rightarrow \Gamma^\times , \ \ k \neq 1
&
\mbox{assumption of~\eqref{eq: mvk 2}}
\\
2. & 
(\neg \varphi , n - k + 1) \rightarrow (\varphi,k)
&
\mbox{\eqref{eq: neg example}}
\\
3. & 
(\neg \varphi , n - k + 1) \rightarrow \Gamma^\times , \ \ n - k + 1 \neq n
&
\mbox{follows from 1 and 2 by cut}
\\
4. & 
(\Box\neg\varphi,n -k +1),\Gamma \rightarrow , \ \ n - k + 1\neq n
&
\mbox{follows from 3 by~\eqref{eq: mvk 1}}
\\
5. & 
(\neg\Box\neg\varphi, k) \rightarrow (\Box\neg\varphi,n - k + 1)
&
\mbox{\eqref{eq: neg example}}
\\
6. & 
(\neg \Box \neg \varphi, k),\Gamma \rightarrow , \ \ k \neq 1
&
\mbox{follows from 4 and 5 by cut}
\\
7. & 
\eqref{eq: mvk 2}
&
\mbox{because, by~\eqref{eq: duality 1}, $\neg \Box \neg$ is $\Diamond$}
\end{array}
\]

Also, it can be shown
that~\eqref{eq: mvml reflexive 2}, \eqref{eq: mvk transitive 2},
\eqref{eq: mvk symmetric 2}, and~\eqref{eq: mvk euclidean 2} follow
from~\eqref{eq: mvml reflexive 1}, \eqref{eq: mvk transitive 1},
\eqref{eq: mvk symmetric 1}, and~\eqref{eq: mvk euclidean 1},
respectively, and vice-versa.
\end{remark}

\section{Embedding many-valued intuitionistic logic into \mvsfour}
\label{s: intuitionistic}

In \cite{Takano86}, following \cite{Rousseau70},
Takano defined a quite general notion of many-valued intuitionistic logic,
that we shall denote by \mvil{}.
We focus on the semantics, because we embed \mvil{} into \mvsfour{}
semantically.
Also, we restrict ourselves to the case of linearly ordered set of
truth values $\bV$ in which
\mvil{}-interpretations may be defined, recursively, as follows.

The language of \mvil{} is that of many-valued propositional logic,
i.e., it does not contain the modal connectives~$\Box$ or $\Diamond$.

An \mvil{}-interpretation ${M} = \langle W,R,{I} \rangle$
is a preordered (reflexive and transitive) many-valued Kripke model
satisfying the (monotonic valuation) requirement below.

For all propositional variables $p \in \P$ and
for all $u,v \in W$ such that $uRv$,
\[
{I}(u,p) \leq {I}(v,p)
\]

The definition of $I$ extends to formulas of
the form $\ast(\varphi_1,\ldots,\varphi_\ell)$ as
\begin{equation}
\label{eq: extension}
{I}(u,\ast(\varphi_1,\ldots,\varphi_\ell))
=
\inf
\{\ast({I}(v,\varphi_1),\ldots,{I}(v,\varphi_\ell)) : v \in S(u)\}
\end{equation}

A straightforward induction on the formula complexity shows
that ${I}$ is monotonic not only on $W\times \P$,
but on the whole $W \times \F$.

We write ${M},u \models_\mvil{} (\varphi,k)$,
if ${I}(u,\varphi) = \bv_k$.
For a sequent $\Gamma \rightarrow \Delta$ and a set of sequents $\bsigma$,
we define the relations
${M},u \models_\mvil{} \Gamma \rightarrow \Delta$,
${M} \models_\mvil{} \Gamma \rightarrow \Delta$,
${M} \models_\mvil{} \bsigma$, and
$\bsigma\models_\mvil{} \Gamma \rightarrow \Delta$
like in the beginning of Section~\ref{s: logic}.

Our translation of \mvil{} to \mvsfour{}, is a generalization of
the two-valued case (first suggested in~\cite{Godel33}).

\begin{definition}
\label{d: translation}
Let $\varphi$ be a formula in the language of \mvil{}.
The translation $\varphi^\t$ of an \mvil{} formula $\varphi$ is obtained
from $\varphi$ by inserting $\Box$ before every its subformula.
That is, $\varphi^\t$ is defined recursively as follows.
\begin{itemize}
\item
For a propositional variable $p$, $p^\t$ is $\Box p$, and
\item
if $\varphi$ is of the form $\ast(\varphi_1,\ldots,\varphi_\ell)$,
then $\varphi^\t$ is $\Box\ast(\varphi_1^\t,\ldots,\varphi_\ell^\t)$.
\end{itemize}
\end{definition}

\begin{lemma}
\label{l: intuitionistic lemma}
Let $M = \langle W,R,I \rangle$ be a preordered Kripke model and
let $\widehat{M} = \langle W,R,\widehat{I} \rangle$ be such that,
for all $u \in W$ and all $p \in \P$,
$\widehat{I}(u,p) = I(u,\Box p)$.
Then $\widehat{M}$ is an {\em \mvil{}}-interpretation, and,
for all $u \in W$ and all formulas $\varphi$ in
the language of {\em \mvil{}},
\begin{equation}
\label{eq: for il}
\widehat{I}(u,\varphi) = I(u,\varphi^t)
\end{equation}
\end{lemma}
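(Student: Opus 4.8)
The plan is to prove both assertions together by a single induction on the complexity of the \mvil{} formula $\varphi$, using the defining equation~\eqref{eq: extension} for $\widehat{I}$ on the \mvil{} side and the recursive clauses for $I$ on the \mvsfour{} side together with the semantic clause $I(u,\Box\chi) = \inf\{I(v,\chi) : v \in S(u)\}$. First I would dispense with the claim that $\widehat{M}$ is an \mvil{}-interpretation: since $M$ is preordered, so is $\widehat{M}$ (the frame is unchanged), and monotonicity of $\widehat{I}$ on propositional variables is exactly the statement that $u R v$ implies $I(u,\Box p) \leq I(v,\Box p)$, which holds because $S(v) \subseteq S(u)$ by transitivity of $R$, so the infimum defining $I(v,\Box p)$ is taken over a subset. (Reflexivity is not even needed for this half, but it is needed to know $\widehat M$ is an \mvil-interpretation in the sense required, and it will be used below.)

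For the base case of~\eqref{eq: for il}, $\varphi = p$ a propositional variable: $\widehat{I}(u,p) = I(u,\Box p) = I(u,p^t)$ by definition of $\widehat I$ and of $p^t$. For the induction step, let $\varphi = \ast(\varphi_1,\ldots,\varphi_\ell)$, so $\varphi^t = \Box\ast(\varphi_1^t,\ldots,\varphi_\ell^t)$. Unwinding the \mvsfour{} semantics,
\[
I(u,\varphi^t)
=
\inf\{\,\ast(I(v,\varphi_1^t),\ldots,I(v,\varphi_\ell^t)) : v \in S(u)\,\},
\]
and by the induction hypothesis applied at each world $v$, $I(v,\varphi_m^t) = \widehat{I}(v,\varphi_m)$, so this equals $\inf\{\ast(\widehat I(v,\varphi_1),\ldots,\widehat I(v,\varphi_\ell)) : v \in S(u)\}$, which is precisely $\widehat{I}(u,\varphi)$ by~\eqref{eq: extension}. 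So the induction step is essentially bookkeeping once the semantic clauses are written out.

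The main obstacle — and the place I would be most careful — is making sure the infimum in the \mvsfour{} clause for $\Box$ genuinely ranges over the same set $S(u)$ that appears in the \mvil{} clause~\eqref{eq: extension}, and that $S(u)$ is nonempty so that no spurious $\bv_n$ (the value of $\inf\emptyset$) creeps in; nonemptiness is guaranteed by reflexivity of $R$, since $u \in S(u)$. One should also confirm that~\eqref{eq: extension} is being read with the same convention — i.e.\ that the \mvil{} semantics likewise takes the infimum over $S(u)$ and that reflexivity makes this well-behaved — which the paper's setup ensures. Given that, the two infima are literally over the same index set with, by the induction hypothesis, the same summands, so they coincide, completing the proof.
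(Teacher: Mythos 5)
Your proposal is correct and follows essentially the same route as the paper's proof: monotonicity of $\widehat{I}$ on variables from transitivity of $R$ (via $S(v)\subseteq S(u)$), then induction on $\varphi$ with the same chain of equalities linking~\eqref{eq: extension} to the \mvsfour{} clause for $\Box$. Your extra remarks on the nonemptiness of $S(u)$ via reflexivity are a reasonable sanity check but not needed, since both infima range over the identical set $S(u)$ in any case.
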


\begin{proof}
To show $\widehat{M}$ is an \mvil{}-interpretation,
we need to show that,
for all $u,v \in W$ such that $uRv$ and for all $p\in \P$,
$\widehat{I}(u,p) \leq \widehat{I}(v,p)$,
i.e., by the definition of $\widehat{I}$,
we need to show $I(u,\Box p) \leq I(v,\Box p)$,
which is clear, because $M$ is transitive.

The proof of~\eqref{eq: for il} is by induction on
the complexity of $\varphi$
(extending $\widehat{I}$ to an intuitionistic valuation).

The basis, i.e., the case of $\varphi$ being a propositional variable,
is by the definition of $\widehat{I}$, and,
for the induction step, if $\varphi$ is of the form
$\ast(\varphi_1,\ldots,\varphi_\ell)$,
then
\begin{eqnarray*}
\widehat{I}(u,\varphi)
& = &
\inf\{\ast(\widehat{I}(v,\varphi_1),\ldots,\widehat{I}(v,\varphi_\ell)) :
v \in S(u) \}
\\
& = &
\inf\{\ast(I(v,\varphi_1^\t),\ldots,I(v,\varphi_\ell^\t)) : v \in S(u) \}
\\
& = &
\inf\{I(v,\ast(\varphi_1^\t,\ldots,\varphi_\ell^\t) : v \in S(u) \}
\\
& = &
I(u,\Box\ast(\varphi_1^\t,\ldots,\varphi_\ell^\t))
\\
& = &
I(u,\varphi^\t)
\end{eqnarray*}
where the first equality is by~\eqref{eq: extension},
the second equality is by the induction hypothesis,
the third and the fourth equalities are by
the definition of the extension of $I$ onto~$W \times \F$, and
the last equality is by the definition of translation $^t$.
\end{proof}

It follows from~\eqref{eq: for il} that
$\widehat{M} \models_\mvil{} \Gamma \rightarrow \Delta$
if and only if
$M \models \Gamma^\t \rightarrow \Delta^t$,
where $\Gamma^\t$ and $\Delta^t$ are obtained from
$\Gamma$ and $\Delta$, respectively,
by translating every formula appearing in them.
Similarly, $\widehat{M} \models_\mvil{} \bsigma$ if and only if
$M \models \bsigma^\t$ where $\bsigma^\t$ is obtained from $\bsigma$
by translating every sequent appearing in it.

\begin{theorem}
{\em $\bsigma \models_\mvil \Gamma \rightarrow \Delta$}
if and only if
$\bsigma^\t \models_\mathbf{C} \Gamma^t \rightarrow \Delta^t$,
where $\mathbf{C}$ is the class of preordered Kripke models.
\end{theorem}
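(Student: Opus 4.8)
The plan is to reduce the claimed equivalence to Lemma~\ref{l: intuitionistic lemma} together with the (strong) completeness of \mvsfour{} with respect to preordered Kripke models (Theorem~\ref{t: extensions}$(iv)$), using the two ``transfer'' observations recorded right after the lemma. The two directions are handled separately but by the same dictionary: an \mvil{}-interpretation $\widehat{M}$ on the one side corresponds to a preordered \mvsfour{}-model $M$ on the other, with $\widehat I(u,\varphi)=I(u,\varphi^\t)$, so that $\widehat M\models_\mvil\Gamma\rightarrow\Delta$ iff $M\models\Gamma^\t\rightarrow\Delta^\t$ and likewise for sets of sequents~$\bsigma$.

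First I would prove the contrapositive of the ``only if'' direction. Suppose $\bsigma^\t\not\models_{\mathbf C}\Gamma^\t\rightarrow\Delta^\t$. Then there is a preordered Kripke model $M=\langle W,R,I\rangle$ with $M\models\bsigma^\t$ but $M\not\models\Gamma^\t\rightarrow\Delta^\t$; concretely, some world $u$ satisfies every labelled formula in $\Gamma^\t$ and none in $\Delta^\t$. Form $\widehat M=\langle W,R,\widehat I\rangle$ as in Lemma~\ref{l: intuitionistic lemma}. By that lemma $\widehat M$ is an \mvil{}-interpretation and $\widehat I(u,\psi)=I(u,\psi^\t)$ for every \mvil{}-formula $\psi$; hence $\widehat M\models_\mvil\bsigma$ while $\widehat M,u\not\models_\mvil\Gamma\rightarrow\Delta$, so $\bsigma\not\models_\mvil\Gamma\rightarrow\Delta$. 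This direction uses nothing beyond the lemma and the transfer observation.

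For the ``if'' direction I would argue contrapositively as well: assume $\bsigma\not\models_\mvil\Gamma\rightarrow\Delta$, so there is an \mvil{}-interpretation $N$ with $N\models_\mvil\bsigma$ and $N,u\not\models_\mvil\Gamma\rightarrow\Delta$ for some world $u$. Since an \mvil{}-interpretation is by definition a preordered many-valued Kripke model with monotone valuation, $N$ itself is already a preordered Kripke model; so I would simply verify that $N\models(\Gamma\rightarrow\Delta)^{??}$—more precisely, I would use the monotonicity requirement to check that on $N$ the original valuation $I_N$ satisfies $I_N(u,\Box p)=I_N(u,p)$ for propositional variables (because $uRu$ gives $I_N(u,p)$ is among the values and monotonicity plus $uRv$ gives $I_N(u,p)\le I_N(v,p)$, so the infimum over successors is attained at $u$), and then, by induction on formula complexity using \eqref{eq: extension}, that $I_N(u,\psi)=I_N(u,\psi^\t)$ for all \mvil{}-formulas $\psi$. (This is essentially the converse bookkeeping to Lemma~\ref{l: intuitionistic lemma}, with $N$ playing the role of both $\widehat M$ and $M$ at once.) Consequently $N\models\bsigma^\t$ and $N\not\models\Gamma^\t\rightarrow\Delta^\t$, witnessing $\bsigma^\t\not\models_{\mathbf C}\Gamma^\t\rightarrow\Delta^\t$.

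The main obstacle is the ``if'' direction: one must be careful that an arbitrary preordered model is \emph{not} in general of the form $\widehat M$, so Lemma~\ref{l: intuitionistic lemma} does not apply verbatim; the fix is precisely the observation that on a \emph{monotone} model (which is what \mvil{}-interpretations are) the outermost $\Box$ in each $\psi^\t$ is semantically inert at every world, i.e., $I_N(u,\Box\chi)=I_N(u,\chi)$ whenever $\chi$ is itself a translation, so the translation $^\t$ does not change truth values on such models. Establishing this inertness cleanly—via $uRu$ for the $\le$ half and monotonicity for the $\ge$ half—is the one place where the reflexivity \emph{and} transitivity of \mvil{}-interpretations, and the monotone-valuation requirement, are all used, and it is the crux of the argument.
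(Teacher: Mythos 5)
Your proposal is correct and follows essentially the same route as the paper: both directions are argued contrapositively, the first by passing from a preordered countermodel $M$ to $\widehat{M}$ via Lemma~\ref{l: intuitionistic lemma}, and the second by observing that an \mvil{}-interpretation, being reflexive with a monotone valuation, satisfies $I(u,\Box p)=I(u,p)$ and is therefore its own $\widehat{M}$, so the lemma's identity $\widehat{I}(u,\varphi)=I(u,\varphi^\t)$ applies to it directly. (The appeal to completeness of \mvsfour{} in your opening sentence is superfluous, since the statement and your argument are purely semantic.)
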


\begin{proof}
If $\bsigma^\t \not\models_\mathbf{C} \Gamma^\t \rightarrow \Delta^t$,
there exists a preordered Kripke model $M$ such that
$M \models \bsigma^t$,
but $M \not\models \Gamma^\t \rightarrow \Delta^t$.
By Lemma~\ref{l: intuitionistic lemma},
$\widehat{M} \models_\mvil \bsigma$,
but $\widehat{M} \not\models_\mvil \Gamma \rightarrow \Delta$.
Thus, $\bsigma\not\models_\mvil \Gamma \rightarrow \Delta$.

Conversely,
if $\bsigma \not\models_\mvil \Gamma \rightarrow \Delta$,
there exists an \mvil{} interpretation $M$ such that
$M \models \bsigma$,
but $M \not\models \Gamma \rightarrow \Delta$.
By definition, $M$ is also a preordered Kripke model and
$\widehat{M}$ defined in Lemma~\ref{l: intuitionistic lemma} is
$M$ itself, because by the definition of an intuitionistic valuation,
the value of a propositional variable $p$ in
a world $u$ is already the minimum of the values of $p$ in $S(u)$.
Therefore $M$, as an \mvsfour{} model,
satisfies $\bsigma^\t$ but not $\Gamma^\t \rightarrow \Delta^t$.
\end{proof}

It follows that strong decidability (and completeness) of \mvsfour{}
implies strong decidability of \mvil{}.

\begin{remark}
\label{r: monotonic connective}
If the principal connective $\ast$ of a formula is monotonic,\footnote{
That is, if
$\bv_{k_1} \leq \bv_{k_1^\prime},\ldots,\bv_{k_\ell}
\leq
\bv_{k_\ell^\prime}$,
then
$\ast(\bv_{k_1},\ldots,\bv_{k_\ell})
\leq
\ast(\bv_{k_1^\prime},\ldots,\bv_{k_\ell^\prime})$. For example, in the three-valued logics of
{\L}ukasiewicz~\cite{Lukasiewicz20} and Kleene~\cite{Kleene38}, disjunction $\vee$ and conjunction $\wedge$ are monotonic.}
then there is no need to insert $\Box$ before $\ast$ in the translation.
This is because $\widehat{I}$ is ``local'' on this connective,
like in modal logic.
\end{remark}

\bibliographystyle{eptcs}
\bibliography{11}

\end{document}